\def\showauthnotes{1}
\newcommand{\authnote}[2]{[{\color{blue}\textbf{#1:}}~{\color{blue} #2}]}
\newcommand{\authnote}[2]{} 
\newcommand{\alloc}{X}
\newcommand{\ralloc}{\mathbf{R}}
\newcommand{\allocy}{Y}
\newcommand{\bids}{\mathbf{b}}
\newcommand{\values}{\mathbf{v}}
\newcommand{\items}{\mathrm{M}}
\newcommand{\agents}{\mathrm{N}}
\newcommand{\NOM}{\textsc{NOM}}
\newcommand{\EFone}{\mathrm{EF}1}
\newcommand{\Alloc}{A}
\newcommand{\SW}{\textsc{SW}}
\newcommand{\NSW}{\textsc{NSW}}
\newcommand{\ESW}{\textsc{ESW}}
\newtheorem{property}{Property}
\DeclareMathOperator*{\argmax}{arg\,max}
\newcommand{\cmark}{\ding{51}}
\newcommand{\xmark}{\ding{55}}
\newcommand{\tikzmark}[1]{\tikz[overlay,remember picture] \node (#1) {};}
\newcommand*{\AddNote}[4]{%
     \begin{tikzpicture}[overlay, remember picture]
         \draw [decoration={brace,amplitude=0.5em},decorate,thick,red]
             ($(#3)!(#1)!($(#3)-(0,1)$)$) --  
             ($(#3)!(#2)!($(#3)-(0,1)$)$)
                 node [align=center, text width=1.5cm, pos=0.5, anchor=west] {#4};
     \end{tikzpicture}
     }
\begin{document}
\title{Fair and Efficient Allocations Without Obvious Manipulations\thanks{The first author is supported in part by a Google Research Scholar Award.}}
\author[a]{Alexandros Psomas}
\author[a]{Paritosh Verma}
\affil[a]{Department of Computer Science, Purdue University.

Email: \texttt{{\{apsomas,verma136\}@cs.purdue.edu}}}
\date{} 

\maketitle

\begin{abstract}
We consider the fundamental problem of allocating a set of indivisible goods among strategic agents with additive valuation functions. It is well known that, in the absence of monetary transfers, Pareto efficient and truthful rules are dictatorial, while there is no deterministic truthful mechanism that allocates all items and achieves envy-freeness up to one item (EF1), even for the case of two agents. In this paper, we investigate the interplay of fairness and efficiency under a relaxation of truthfulness called non-obvious manipulability (NOM), recently proposed by~\cite{troyan2020obvious}. We show that this relaxation allows us to bypass the aforementioned negative results in a very strong sense. Specifically, we prove that there are deterministic and EF1 algorithms that are not obviously manipulable, and the algorithm that maximizes utilitarian social welfare (the sum of agents' utilities), which is Pareto efficient but not dictatorial, is not obviously manipulable for $n \geq 3$ agents (but obviously manipulable for $n=2$ agents). At the same time, maximizing the egalitarian social welfare (the minimum of agents' utilities) or the Nash social welfare (the product of agents' utilities) is obviously manipulable for any number of agents and items. Our main result is an approximation preserving black-box reduction from the problem of designing EF1 and NOM mechanisms to the problem of designing EF1 algorithms. En route, we prove an interesting structural result about EF1 allocations, as well as new ``best-of-both-worlds'' results (for the problem without incentives), that might be of independent interest.
\end{abstract}

\section{Introduction}\label{sec: intro}

We consider the fundamental problem of allocating a set of indivisible items among strategic agents with additive valuation functions.
It is well-understood that, in the absence of monetary transfers, fairness, efficiency and truthfulness cannot be reconciled, in a very strong sense.
For example, a serial dictatorship (arguably the most unfair allocation rule) is the \emph{unique} truthful and Pareto efficient mechanism~\cite{papai2000strategyproof, klaus2002strategy, ehlers2003coalitional}, even for the case of two agents and randomized mechanisms (or, equivalently, divisible items)~\cite{schummer1996strategy}. At the same time, achieving envy-freeness up to one item (EF1)~\cite{budish2011combinatorial}, one of the most popular fairness notions for indivisible goods that is compatible with Pareto efficiency~\cite{caragiannis2019unreasonable}, is impossible for truthful and deterministic mechanisms to achieve, even for two additive agents~\cite{amanatidis2017truthful}.

The standard abstraction for a strategic agent assumes that, when presented with an algorithm, the agent will perfectly understand it and optimally respond to it. Therefore, a truthful mechanism should protect against all deviating strategies. In this paper we aim to escape the aforementioned impossibility results for truthful mechanisms by relaxing this requirement: our goal is to design mechanisms that protect only against \emph{obviously dominant} deviating strategies. 

Informally, a strategy is obviously dominant if it guarantees an outcome better than any outcome of every other strategy~\cite{li2017obviously}. Designing obviously truthful mechanisms, those where honest reporting is obviously dominant, is a problem that has attracted significant attention in the past few years~\cite{ashlagi2018stable,kyropoulou2019obviously,ferraioli2019obvious,arribillaga2020obvious}. However, since dishonesty is also a strategy, a deviation can be profitable but not obviously profitable. A manipulation is obvious if it yields a higher utility than truth-telling in either the best or worst-case scenarios. Seminal work of~\cite{troyan2020obvious} defines this notion, and shows that relaxing truthfulness to non-obvious manipulability can strictly improve the designer's objective in a number of domains (e.g. two-sided matching).

In this paper we aim to provide a comprehensive study of what can and what cannot be achieved by non-obviously manipulable mechanisms in the fundamental problem of fair and efficient allocation of indivisible goods.

\subsection{Our Contribution}

We first consider the problem of designing a non-obviously manipulable mechanism (NOM) which always outputs allocations that are envy-free up to one item (EF1), a task that is impossible for truthful mechanisms to achieve~\cite{amanatidis2017truthful}. We prove that the Round-Robin procedure (agents choose items one at a time, following a fixed order) which is known to always output EF1 allocations, is also not obviously manipulable (Theorem~\ref{thm:rr is nom}), giving a separation between truthfulness and non-obviously manipulability for our problem. In fact, we show that NOM is compatible with even stronger notions of fairness. Specifically, a fairness guarantee stronger than EF1 that a randomized algorithm could satisfy is to be ex-post EF1 (i.e. to always output an EF1 allocation), and simultaneously be envy-free in expectation (ex-ante). There are known algorithms that satisfy this ``best-of-both-worlds'' guarantee~\cite{freeman2020best,aziz2020simultaneously}; we show that the PS-Lottery algorithm of~\cite{aziz2020simultaneously} is not-obviously manipulable. This result uses a connection between ex-ante proportionality and not obvious manipulability; we later exploit this connection to give new negative ``best-of-both-worlds'' results!

We proceed to study efficient algorithms. Specifically, we study the three most common notions of efficiency in this setting: utilitarian social welfare (the sum of agents' utilities), egalitarian social welfare (the minimum of agents' utilities), and Nash social welfare (the geometric mean of agents' utilities). For the case of $n=2$ agents all three notions are incompatible with non obvious manipulability, i.e. every (deterministic or randomized) algorithm that always outputs an integral solution that is optimal with respect to any of these objectives is obviously manipulable (Theorem~\ref{theorem:NOM+SWM+2agents}). This is also true for the case of more than three agents for egalitarian and Nash social welfare (Theorems~\ref{thm: NOM + egalitarian + n agents} and~\ref{thm: MNW impossibility}).
The intuition for these impossibilities is the following. There are instances where a specific agent\footnote{Informally, the agent that the tie-breaking rule favors.} gets her least favorite item, in the worst-case. However, solutions that are optimal for egalitarian and Nash welfare prioritize maximizing the number of agents that get non-zero utility. Therefore, by reporting that she only values a single item (her favorite item) to the mechanism, this agent can force the mechanism to give her this item (or increase the probability that she gets it) in the worst-case.
Surprisingly, the same is not true for utilitarian social welfare:\footnote{Note that when discussing utilitarian social welfare, we assume that the input is normalized, i.e. the utility of an agent for the grand bundle of items is equal to $1$. See Section~\ref{sec: efficient} for details.} there is a not-obviously manipulable algorithm that always outputs an allocation that maximizes utilitarian social welfare for the case of $n \geq 3$ agents (Theorem~\ref{theorem:NOM+SWM+nagents}). This gives an efficient and not obviously manipulable mechanism that is not dictatorial.

Since maximizing Nash social welfare, a common way to simultaneously achieve Pareto efficiency and envy-freeness up to one item, is obviously manipulable, the next natural question is whether there are any other Pareto efficient and EF1 algorithms that are not-obviously manipulable. Our main result answers this question in the affirmative. In fact, we prove that the problem of designing a Pareto efficient, EF1 and NOM \emph{mechanism} is exactly as hard as designing a Pareto efficient and EF1 \emph{algorithm}. We give a black-box reduction, that preserves Pareto efficiency guarantees, which given an algorithm that always outputs (clean and non-wasteful) EF1 allocations, outputs a mechanism that is not obviously manipulable and always outputs EF1 allocations.
There are two crucial steps in our reduction. Given the valuation function $\values_i$ of agent $i$, let $\EFone(i,\values_i)$ be the set of allocations that are clean, non-wasteful and $\EFone$ for agent $i$. 
First, our reduction ensures that, when agent $i$ considers a reporting her valuation as $\bids_i$, every single allocation in $\EFone(i,\bids_i)$ is possible, i.e. for every allocation $\Alloc \in \EFone(i,\bids_i)$ there are reports for the other agents such that $\Alloc$ is the output of the mechanism. Second, we prove the following intuitive structural result, that might be of independent interest. For an agent $i$ with valuation function $\values_i$, the worst-case allocation in $\EFone(i,\values_i)$ is better than the worst-case allocation in $\EFone(i,\bids_i)$ (Lemma~\ref{lemma:least-valued-EF1}). These facts combined give the ``worst-case'' part from the definition of NOM, which is the most challenging step. 
As a direct application of our reduction, by giving as input the algorithmic results~\cite{garg2021fair}, we get a fractionally Pareto efficient, EF1 and NOM mechanism for additive agents; this last result cannot be improved by adding ex-ante fairness guarantees (due to a result of~\cite{freeman2020best}).

Our results show a connection between certain notions of fairness and not obvious manipulability: positive algorithmic results for $\EFone$ and ``best-of-both-worlds'' guarantees can be used to get positive results for not obviously manipulable mechanisms. In Section~\ref{sec: best of both} we exploit this connection in the other direction, and show how negative results for not obviously manipulable mechanisms can be used to give new negative results for ``best-of-both-worlds'' algorithms. 
Specifically, we prove that it is impossible to achieve ex-ante proportionality, ex-post Pareto efficiency while ex-post maximizing the number of agents with positive utility. As direct corollaries we recover a known result of~\cite{freeman2020best} that ex-ante proportional and ex-post MNW allocations do not exist, as well as two new results: it is impossible to achieve ex-ante proportionality by randomizing over allocations that are (1) ex-post Pareto efficiency and ex-post egalitarian, or (2) ex-post leximin.

\begin{center}
\begin{table}
\centering
\begin{tabular}{| c | p{50mm} |}
\hline
 Fairness \& Efficiency Notions & Existence of NOM Mechanism  \\ 
 \hline \hline
 EF1 & \cmark [Theorem \ref{thm:rr is nom}]  \\ 
 \hline
 Ex-ante EF \& Ex-post EF1 & \cmark  [Theorem \ref{best-of-both-worlds-and-NOM}] \\
 \hline
 Social Welfare Maximization & \xmark $\ n=2$ agents [Theorem \ref{theorem:NOM+SWM+2agents}] \newline \cmark $n \geq 3$ agents [Theorem \ref{theorem:NOM+SWM+nagents}] \\
 \hline
 Egalitarian Welfare Maximization & \xmark [Theorem \ref{thm: NOM + egalitarian + n agents}] \\
 \hline
 Nash Social Welfare Maximization & \xmark [Theorem \ref{thm: MNW impossibility}] \\
 \hline
 fPO + EF1 & \cmark [Theorem \ref{mechanism:NOM+PO+EF1}] \\
 \hline
 Ex-ante EF \& Ex-post fPO + EF1 & \xmark ~\cite{freeman2020best} \\
 \hline
\end{tabular}
\caption{\label{tab:table-name}Existence of NOM mechanisms for various fairness and efficiency notions.}
\end{table}
\end{center}

\subsection{Related work}

\cite{troyan2020obvious} define the notion of not obvious manipulability. They show that a number of known mechanisms which are not strategyproof satisfy not obviously manipulability, e.g. stable mechanisms in the context of two-sided matching are not obviously manipulable (in direct contrast to~\cite{roth1982economics}, who shows that there exists no mechanism that is both stable and strategyproof). At the same time, many known not strategyproof mechanisms are obviously manipulable, e.g. the first-price auction, and more generally the pay-as-bid auction.


The notion of non obvious manipulability in computational social choice has been explored in some very recent works~\cite{ortega2019obvious,aziz2021obvious}.~\cite{ortega2019obvious} study cake-cutting. They show that, as opposed to truthfulness, NOM is compatible with proportionality: an adaptation of the well-known moving knife procedure satisfies both properties. They also observe that every proportional (direct revelation) rule satisfies the worst-case guarantee required for NOM (we show a similar statement here, in Lemma~\ref{lemma:ex-ante-EF-and-NOM}), while every Pareto efficient (direct revelation) rule satisfies the best-case guarantee required for NOM. Therefore, in the context of direct revelation mechanisms in cake-cutting, proportional and Pareto efficient rules are not obviously manipulable. \cite{aziz2021obvious} explore obvious manipulations in voting theory. They give sufficient conditions for voting rules to be not obviously manipulable, and show that a number of voting rules escape the pessimistic conclusions of the Gibbard-Satterthwaite theorem, e.g., the Borda rule is not obviously manipulable. After the conference version of this paper was published, several new works study non-obvious manipulability in extensive-form mechanisms~\cite{archbold2023non}, matching~\cite{arribillaga2023obvious}, bilateral trade~\cite{archbold2023nontrade}, and single-peaked preferences~\cite{arribillaga2023not}.

In the problem we study here, one can bypass the aforementioned impossibility results for no-money truthful mechanisms by relaxing the notion of incentive compatibility to Bayesian incentive compatibility (BIC)~\cite{gkatzelis2023getting}.
Other than relaxing the notion of incentive compatibility, other works restrict agents' valuations, e.g. focus on dichotomous~\cite{halpern2020fair,babaioff2021fair,benabbou2021finding,barman2021truthful} or 
Leontief valuations~\cite{ghodsi2011dominant,friedman2014strategyproof,parkes2015beyond}, or use money-burning (wasting resources) as a substitute for payments~\cite{hartline2008optimal,cole2013mechanism,fotakis2016efficient,friedman2019fair,abebe2020truthful}.

Ignoring incentives altogether, there is a growing literature that studies the problem of allocating indivisible goods in a fair and efficient manner.
For additive agents,~\cite{caragiannis2019unreasonable} show that the integral allocation that maximizes the Nash social welfare is Pareto efficient and envy-free up to one item. Finding such allocations is APX-hard~\cite{lee2017apx}. However,~\cite{barman2018finding} give an algorithm for finding a Pareto efficient and $\EFone$ allocation that runs in pseudo-polynomial time (and polynomial time for bounded valuations) for the case of additive agents.~\cite{barman2018finding} also provide a proof of existence for $\EFone$ and fractionally Pareto efficient allocations; recent work by~\cite{garg2021fair} gives a pseudo-polynomial time algorithm for finding such allocations. Beyond additive agents,~\cite{caragiannis2019unreasonable} show that there are no Pareto efficient and $\EFone$ allocations for subadditive or supermodular valuations. At the same time, the problem of approximating the Nash social welfare under different valuation functions (sometimes even combined with some fairness guarantees) has received significant attention in the past few years~\cite{cole2018approximating,cole2017convex,anari2018nash,garg2018approximating,chaudhury2018fair,chaudhury2021fair,chaudhury2021little}. 

\section{Preliminaries}\label{sec: prelims}

We consider the problem of allocating a set $\items$ of $m$ items among a set $\agents$ of $n$ agents with additive utilities. Henceforth we will use $[k]$ to denote the set $\{1,2, \ldots, k\}$ for any positive integer $k \in \mathbb{Z}_+$, and we will assume that $\items = [m]$ and $\agents = [n]$.


\paragraph{Allocations}
A \emph{fractional allocation} $A \in [0,1]^{n\cdot m}$ is a $m \times n$ matrix that defines for each agent $i \in \agents$ and item $j \in \items$ the fraction $A_{i,j}$ of the item $j$ that the agent $i$ will receive. We represent a fractional allocation as $A = (A_1, A_2, \ldots, A_n)$ where $A_i = (A_{i,1}, A_{i,2}, \ldots, A_{i,m}) \in [0,1]^m$ denotes the fractions of all items allocated to agent $i \in \agents$. A feasible allocation satisfies $\sum_{i \in \agents} A_{i,j} \leq 1$, for all items $j \in \items$. 
A fractional allocation $A$ is \emph{integral} if $A_i \in \{ 0, 1 \}^m$ for all agents $i \in \agents$. 
An integral allocation can be equivalently defined as $n$ disjoint subsets of the set of items $\items$, this representation is often convenient. Let $\Pi_n(\items) = \{(S_1, S_2, \ldots, S_n) \ | \ \cup_{i=1}^n S_i \subseteq M \text{ and } \forall i \neq j,\ S_i \cap S_j = \emptyset \}$ denote the set of all $n$ ordered disjoint subsets of the set of items $\items$. Given an integral allocation $A = (A_1, \ldots, A_n)$, we can interpret the binary vectors $A_i = (A_{i,1}, A_{i,2}, \ldots, A_{i,m})$ as subsets of items $A_i \coloneqq \{j \in \items \ | \ A_{i,j} = 1\}$. This allows us to view an integral allocation $A$ as $n$ ordered disjoint subsets of items, i.e., $A = (A_1, A_2, \ldots, A_n) \in \Pi_n(\items)$. An integral allocation $A = (A_1, A_2, \ldots, A_n)$ is \emph{complete} if $\cup_{i=1}^n A_i = M$ and \emph{partial} if $\cup_{i=1}^n A_i \subset M$. Unless stated otherwise, we use the term allocation to refer to complete allocations. We use the term \emph{bundle} to refer to any subset of items.


A \emph{randomized allocation} $\ralloc = \{(p^z, A^z)\}_{z=1}^k$ is a probability distribution (or a lottery) over a set of integral allocations, i.e., for every $z \in [k]$, $A^z$ is an integral allocation that occurs with a probability $p^z$. The sum of the probabilities is equal to one, $\sum_{z=1}^k p^z = 1$. The integral allocations $A^1, A^2, \ldots, A^k$ constitute the \emph{support} of $\ralloc$. For each randomized allocation $\ralloc = \{(p^z, A^z)\}_{z=1}^k$ there exists a corresponding \emph{expected fractional allocation} $X = \sum_{z=1}^k p^z \cdot A^z$, where for each agent $i \in \agents$, the fraction of items given to agent $i$ is $X_i = \sum_{z=1}^k p^z \cdot A^z_i$.\footnote{Note that multiple randomized allocations may have the same expected fractional allocation.} 
Here $X_{i,j}$ can be thought of as the probability with which agent $i$ is allocated item $j$ in an integral allocation that is sampled from $\ralloc$. For notational clarity, we will use the letters $X,Y$ to denote allocations that are fractional or integral, $A,B$ to denote allocations that are exclusively integral and $\ralloc$ for randomized allocations.


\paragraph{Preferences}
Each agent $i \in \agents$ has a private valuation function $\values_i(.)$ that outputs the utility that agent $i$ derives from a given set (or fractions) of items. We use the notation $\values_i(X_i)$ (resp. $\values_i(A_i)$) to denote the utility that agent $i$ gets from the items allocated to her in a fractional allocation $X$ (resp. integral allocation $A$).
We consider agents with additive utilities. An additive agent $i \in \agents$ has a non-negative valuation $v_{i,j}$ for receiving the entirety of item $j$; her utility for an allocation $X = (X_1, \dots, X_n)$ is $\values_i(X_i) = \sum_{j \in \items} X_{i,j} v_{i,j}$; for an integral allocation $A = (A_1, A_2, \ldots, A_n)$, the utility is simply $\values_i(A_i) = \sum_{j \in A_i} v_{i,j}$. 

\paragraph{Mechanisms}
A mechanism $\mathcal{M}$ elicits ``bids'' (i.e. reported valuations) $\bids = (\bids_1, \dots, \bids_n)$ from every agent $i \in \agents$, and outputs a feasible allocation. We write $X_{i,j} (\bids)$ for the fraction of item $j$ allocated to agent $i$ when each agent $j \in \agents$ reports a valuation $\bids_j$. 
We use the notation $\bids$ (and $\bids_i$) to refer to the input to a mechanism, and $\values$ (and $\values_i$) to refer to the true valuations of agents.

A mechanism $\mathcal{M}$ is \emph{deterministic} if for every reported valuations $\bids$ it deterministically outputs an integral allocation. We will use $\mathcal{M}(\bids) = (\mathcal{M}_1(\bids),\mathcal{M}_2(\bids), \ldots, \mathcal{M}_n(\bids))$ to denote the integral allocation that a mechanism $\mathcal{M}$ outputs given reported valuations $\bids = (\bids_1, \dots, \bids_n)$ as input, here $\mathcal{M}_i(\bids) \subseteq \items$ represents the bundle of items that agent $i$ receives in the allocation $\mathcal{M}(\bids)$. A mechanism $\mathcal{R}$ is \emph{randomized} if for every reported valuations $\bids$ it outputs a randomized allocation, i.e. it returns integral allocations that are drawn from a probability distribution corresponding to a randomized allocation. 
Since every randomized allocation has an associated expected fractional allocation, the output of a randomized mechanism for reported valuations $\bids$ can also be interpreted as representing a fractional allocation. We use $\mathcal{R}(\bids) = (\mathcal{R}_1(\bids), \ldots, \mathcal{R}_n(\bids))$ to denote the \emph{expected fractional allocation} that a randomized mechanism outputs given bids $\bids$; the vector $\mathcal{R}_i(\bids) \in [0,1]^m$ represents the probabilities with which agent $i$ receives each item in the sampled integral allocation.

Note that, for a deterministic mechanism $\mathcal{M}$, the value $\values_i(\mathcal{M}_i(\bids))$ denotes the utility of agent $i$ (as per her true valuation $\values_i$) for the allocation output by $\mathcal{M}$ on input $\bids$. Similarly, $\values_i(\mathcal{R}_i(\bids))$ denotes the expected utility of agent $i$ in the sampled integral allocation when the input to a randomized mechanism $\mathcal{R}$ is $\bids$.



\paragraph{Notions of incentive compatibility}
A mechanism $\mathcal{M}$ is \emph{truthful} if agents cannot strictly improve their utility by misreporting their valuation, i.e. for all $i \in \agents$, valuations $\values_i, \bids_i$, and reports of the other agents $\values_{-i}$, $\values_i( \mathcal{M}_i(\values_i, \values_{-i}) ) \geq \values_i( \mathcal{M}_i(\bids_i, \values_{-i}) )$. Our work focuses on a notion of incentive compatibility that is a relaxation of truthfulness called \emph{not obvious manipulability}.

\begin{definition}[Not Obviously Manipulable~\cite{troyan2020obvious}]\label{dfn: nom}
A 
mechanism $\mathcal{M}$ is \emph{not obviously manipulable} (or simply NOM) if for every agent $i \in \agents$ with valuation function $\values_i$, and every possible report $\bids_i$ of agent $i$, the following two inequalities hold:
\begin{enumerate}[start = 1, label={(\arabic*)}]
    \item\label{inequality:NOM-worst-case} $\min\limits_{\values_{-i}} \values_i( \mathcal{M}_i(\values_i, \values_{-i}) ) \geq \min\limits_{\values_{-i}} \values_i( \mathcal{M}_i(\bids_i, \values_{-i}) )$.
    \item\label{inequality:NOM-best-case} $\max\limits_{\values_{-i}} \values_i( \mathcal{M}_i(\values_i, \values_{-i}) ) \geq \max\limits_{\values_{-i}} \values_i( \mathcal{M}_i(\bids_i, \values_{-i}) )$.
\end{enumerate}
\end{definition}

Intuitively, if a mechanism is NOM then an agent cannot increase her worst-case utility or increase her best-case utility (computed with respect to the true valuation) by misreporting her valuation, i.e., the best-case and worst-case utilities are maximized when agents report their true valuation. If either the worst-case utility or the best-case utility can be improved then the mechanism is \emph{obviously manipulable}.

To define NOM for a randomized mechanism $\mathcal{R}$, we compare the expected utilities $\values_i(\mathcal{R}_i(\cdot \ , \cdot))$ in inequality \ref{inequality:NOM-worst-case} and \ref{inequality:NOM-best-case}, instead of $\values_i(\mathcal{M}_i(\cdot \ , \cdot))$ as in the case of deterministic mechanisms. Notice that for randomized mechanisms, the definition of NOM takes an expectation over the randomness of the mechanism, and minimum/maximum are over the reports of other agents; we sometimes write ``NOM in expectation'' when referring specifically to a randomized mechanism.

\paragraph{Notions of efficiency}

An integral allocation $\Alloc = (A_1, \ldots, A_n)$ is \emph{Pareto efficient} (or PO) iff there is no integral allocation $\Alloc' = (A'_1, \ldots, A'_n)$ such that for all agents $i \in \agents$, $\values_i(A'_i) \geq \values_i(A_i)$, and for at least one agent this inequality is strict. An (integral or fractional) allocation $X = (X_1, \ldots, X_n)$ is \emph{fractionally Pareto efficient} (or fPO) iff there is no fractional allocation $X' = (X'_1, \ldots, X'_n)$ such that for all agents $i \in \agents$, $\values_i(X'_i) \geq \values_i(X_i)$, and for at least one agent this inequality is strict. Note that fractional Pareto efficiency is a strictly stronger notion than Pareto efficiency. 
An (integral or fractional) allocation $X$ is $\alpha$-approximately (resp. fractionally) Pareto efficient if there is no integral allocation (resp. fractional allocation) $X' = (X'_1, X'_2, \ldots, X'_n)$ such that $\alpha \cdot \values_i(X'_i) \geq \values_i(X_i)$, with at least one of these inequalities strict~\cite{ruhe1990varepsilon,immorlica2017approximate,friedman2019fair,zeng2020fairness}.
Notice that for $\alpha = 1$ we exactly recover Pareto efficiency (resp. fractional Pareto efficiency).

A (partial) allocation $A = (A_1, A_2, \ldots, A_n)$ is \emph{non-wasteful} iff for each $i \in \agents$, $v_{i,j} = 0$ for every unallocated item $j \in \items \setminus \cup_{k \in \agents} A_k$\footnote{Note that, every Pareto efficient (fractionally Pareto efficient) integral allocation is non-wasteful.}. A bundle $S \subseteq \items$ is clean for agent $i$ if the removal of any good reduces $i$'s value for $s$  $\values_i(S) > \values_i (S \setminus \{ g \} )$ for all $g \in S$. Further, a (partial) allocation $A = (A_1, A_2, \ldots, A_n)$ is clean if for each $i \in \agents$ the bundle $A_i$ is clean for agent $i$.

Often, we are interested in computing or approximating specific points (i.e., allocations) present on the Pareto frontier that have additional desirable properties.
The \emph{utilitarian social welfare} of an allocation $\alloc$, denoted by $\SW(\alloc)$, is defined as the sum of utilities that each agent gets in the allocation $\alloc$, i.e., $\SW(\alloc) = \sum_{i \in \agents} \values_i(X_i)$. 

The \emph{Nash social welfare} of an (integral or fractional) allocation $\alloc$, denoted by $\NSW(\alloc)$, is defined as the geometric mean of the utilities of agents in the allocation $\alloc$, i.e., $\NSW(\alloc) = (\prod_{i \in \agents} \values_i(x_i))^{\frac{1}{n}}$. An \emph{integral} allocation that maximizes the Nash social welfare, among all integral allocations, is called an \emph{Nash social welfare maximizing} (or MNW) allocation. There are instances where every integral allocation $\Alloc$ is such that $\NSW(\Alloc) = 0$, i.e., there is always an agent having zero utility. To cover such an edge case, integral MNW allocations are defined as follows \cite{caragiannis2019unreasonable}. An integral allocation is MNW iff $(i)$ it maximizes, among the set of all integral allocations, the number of agents having positive utility and $(ii)$ for any such maximal set of agents $S$, it maximizes the geometric mean of the utilities of agents in $S$.

The \emph{egalitarian social welfare} of an (integral or fractional) allocation $\alloc$, denoted as $\ESW(\alloc)$, is defined as the minimum utility that any agent derives from allocation $\alloc$, i.e., $\ESW(\alloc) = \min_{i \in \agents} \values_i(x_i)$. An integral allocation is \emph{egalitarian social welfare maximizing} iff $(i)$ it maximizes, among the set of all integral allocations, the number of agents having positive utility and $(ii)$ for any such maximal set of agents $S$, it maximizes the egalitarian social welfare, i.e., the minimum utility of agents in $S$.

\paragraph{Notions of fairness}
An (integral or fractional) allocation $\alloc$ is called \emph{envy-free} (EF) if for every pair of agent $i,j \in \agents$, agent $i$ values her allocation at least as much as the allocation of agent $j$, i.e., $\values_i(X_i) \geq \values_i(X_j)$. An (integral or fractional) allocation $\alloc$ is called \emph{proportional} (PROP) if the utility of each agent is at least a $\frac{1}{n}$ fraction of her utility for all the items, i.e., $\values_i(X_i) \geq \frac{1}{n} \values_i(\items)$ for every $i \in \agents$\footnote{Note that, if an (integral or fractional) allocation is envy-free then it is proportional as well.}.
When considering integral allocations, achieving proportionality or envy-freeness is impossible (consider the case of a single item and two agents that both want it), so weaker notions of fairness have been defined. An integral allocation $\Alloc$ is envy-free up to one item (EF1) if for every pair of agents $i,j \in \agents$, where $A_j \neq \emptyset$, agent $i$ values her allocation at least as much as the allocation of agent $j$, subject to the removal of one item from agent $j$'s bundle, i.e., $\values_i(A_i) \geq \values_i(A_j \setminus \{g \})$ for some item $g \in A_j$.

Finally, we are interested in, so called, best-of-both-worlds guarantees. Let $\mathcal{P}$ be a fairness or efficiency notion for integral allocations and $\mathcal{Q}$ be a fairness or efficiency notion for fractional allocations. A randomized allocation $\ralloc = \{(p^z, \Alloc^z)\}_{z=1}^k$ satisfies the notion $\mathcal{P}$ \emph{ex-post} if each integral allocation $\Alloc^z$ in the support of $\ralloc$ satisfies the notion $\mathcal{P}$. Additionally, the randomized allocation $\ralloc$ satisfies the notion $\mathcal{Q}$ \emph{ex-ante} if the expected fractional allocation $\allocy = \sum_{z=1}^k p^z \cdot \Alloc^z$ corresponding to $\ralloc$ satisfies the notion $\mathcal{Q}$.

\section{Fair Mechanisms}\label{sec: fair}
In this section we study whether non obvious manipulability is compatible with envy freeness up to one item. 
For deterministic mechanisms, it is known that there is no truthful mechanism that always outputs an EF1 allocation, even for the case of two agents~\cite{amanatidis2017truthful}. 
In sharp contrast, we show that Round-Robin, arguably the simplest algorithm that guarantees EF1, is not obviously manipulable.
Recall that the Round-Robin algorithm allocates items to agents over a sequence of rounds. In each round, the agents choose one item each (a highest-value remaining item, as per their valuation)\footnote{Ties can be broken in an arbitrary manner.}
following a fixed, arbitrary order.

\begin{theorem}\label{thm:rr is nom}
Round-Robin is \emph{not obviously manipulable}.
\end{theorem}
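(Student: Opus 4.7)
The plan is to verify each of the two NOM inequalities separately for an arbitrary agent $i$ at position $k$ in the round-robin ordering, with true valuation $\values_i$ and arbitrary misreport $\bids_i$. Order items so that rank $s$ denotes the item with the $s$-th largest value under $\values_i$ (breaking ties by a fixed rule), and note that $i$ makes exactly $r^* = \lceil (m - k + 1)/n \rceil$ picks. For the best-case inequality, I observe that with truthful reporting the best-case outcome---which occurs when every other agent picks items of low $\values_i$-value (achievable via a suitable misreport, e.g.\ reporting a reversed $\values_i$), leaving $i$ to take her top item each round---gives $i$ the ranks $1, 2, \ldots, r^*$ for utility $\sum_{s=1}^{r^*} \values_i(\mathrm{rank}~ s)$. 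Since any $r^*$-item bundle has $\values_i$-value at most this sum, no misreport can improve the best case.

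For the worst-case inequality, I fix the adversary strategy in which each other agent $j$ misreports $\bids_j = \values_i$, so that on her turn she picks the highest-$\values_i$ remaining item. Against this strategy, truthful $i$ picks her $s$-th item at rank exactly $(s-1)n + k$, giving worst-case utility $W^* = \sum_{s=1}^{r^*} \values_i((s-1)n + k)$. The key lemma is that, against the same adversary strategy, any misreport $\bids_i$ produces a bundle $B$ with $\values_i(B) \leq W^*$. Let $V_r$ denote the ranks of $i$'s first $r$ picks. A short greedy-exchange argument shows that at every moment the set $A$ of ranks already taken by adversaries equals the smallest $|A|$ ranks in $[m] \setminus V_{\mathrm{current}}$. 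Writing the ranks of $B$ in increasing order $r_{(1)} \leq \cdots \leq r_{(r^*)}$, it suffices to prove $r_{(s)} \geq (s-1)n + k$ for every $s$---equivalently $|V_{r^*} \cap T_s| \leq s - 1$ where $T_s = \{1, \ldots, (s-1)n + k - 1\}$---since monotonicity of $\values_i$ in rank then gives the termwise bound $\values_i(r_{(s)}) \leq \values_i((s-1)n + k)$ and summing yields $\values_i(B) \leq W^*$.

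I would prove the invariant $|V_r \cap T_s| \leq s - 1$ by induction on $r$ with $T_s$ fixed. It is immediate for $r \leq s - 1$. For $r \geq s - 1$ the only nontrivial case is $|V_r \cap T_s| = s - 1$, where $|T_s \setminus V_r| = (s-1)(n-1) + k - 1$, while the adversaries have made $r(n-1) + k - 1 \geq (s-1)(n-1) + k - 1$ picks by just before $i$'s $(r+1)$-st turn; the greedy-exchange observation then forces $A \supseteq T_s \setminus V_r$, so $T_s$ is fully depleted and $i$ must pick outside $T_s$, preserving $|V_{r+1} \cap T_s| = s - 1$. The main subtlety---the step I would focus on---is exactly this counting together with the greedy-exchange fact: they show that no matter how $i$ misreports, the adversary's simple ``always grab $i$'s top remaining item'' strategy enforces the sorted-rank majorization $r_{(s)} \geq (s-1)n + k$, matching rank-for-rank the bundle that truth-telling receives in its own worst case.
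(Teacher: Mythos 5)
Your proof is correct and follows essentially the same route as the paper's: both arguments use the adversary profile in which every other agent ranks items exactly as $\values_i$ does, show that truthful reporting obtains $W^* = \sum_{s=1}^{r^*} \values_i\bigl((s-1)n+k\bigr)$ against it (and at least $W^*$ against any profile, since at most $(s-1)n+k-1$ items are gone before the $s$-th pick --- the one small step you assert rather than state explicitly), and then bound any misreport's bundle against that profile by $W^*$; the best-case argument is identical. Your sorted-rank majorization $r_{(s)} \geq (s-1)n+k$, proved via the invariant on $|V_r \cap T_s|$, is a somewhat more explicit and careful rendering of the paper's ``without loss of generality $i$ picks in decreasing order of true value'' exchange step, but the underlying counting is the same.
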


\begin{proof}
We prove that the two inequalities in Definition~\ref{dfn: nom} hold for Round-Robin. That is, an agent cannot improve her worst-case or best-case utilities by misreporting. Let $i \in \agents$ be the  $i$-th agent in the Round-Robin order, and let $\values_i$ be her valuation vector. Assume without loss of generality that $v_{i,j} \geq v_{i,j+1}$ for all $j = 0, \dots, m - 1$.
Let $\ell$ be the number of items agent $i$ receives (either $\lfloor \frac{n}{m} \rfloor$ or $\lfloor \frac{n}{m} \rfloor + 1$, depending on her rank, $n$ and $m$), and notice that the number of items she gets is exactly $\ell$, no matter what the reports are.

We first prove inequality~\ref{inequality:NOM-worst-case} from the definition of $\NOM$, i.e. the worst-case guarantee. The $k$-th time agent $i$ gets to pick an item, there is an un-allocated item that she values at least $v_{i,i + (k-1)n}$, since only $(k-1)n + i - 1$ items have been allocated at that point. Therefore, the worst-case for agent $i$ under honest reporting is realized when all other agents rank the items in the same order as her, and her utility is exactly $\sum_{k=1}^{\ell} v_{i,i + (k-1)n}$. Now consider the case when all agents other than $i$ rank items in the same order as $i$ (i.e. $v_{i',j} \geq v_{i',j+1}$ for all $i' \in \agents$), and $i$ reports some vector $\bids_i$. Let $j_1, \dots, j_{\ell}$ be the items she receives, in the order she picked them. First, without loss of generality, $i$ picks these items in decreasing order with respect to her true valuation, i.e. $v_{i,j_{k}} \geq v_{i, j_{k+1}}$:
if she picks an item $k$ at round $t$, and at a later round $t'$ she could pick an item $k'$ with $v_{i,k'} > v_{i,k}$, then by the choice of valuation for agents other than $i$, picking $k'$ in round $t$ and $k$ in round $t'$ is also possible.
Second, $v_{i,j_k} \leq v_{i,i + (k-1)n}$: when $i$ picks an item for the $k$-th time, $(i-1) + (k-1)(n-1)$ of her top $i + (k-1)n$ items have been picked by other agents, and $k-1$ of her top $i + (k-1)n$ items have been picked by herself (since items are picked in decreasing order with respect to the true valuation). Therefore $v_{i,i + (k-1)n}$ is the largest value $v_{i,j_k}$ can have; $i$'s utility is at most $\sum_{k=1}^{\ell} v_{i,i + (k-1)n}$.

Next, we prove inequality~\ref{inequality:NOM-best-case} (the best-case guarantee). Since agent $i$ gets exactly $\ell$ items, her utility is at most $\sum_{k=1}^\ell v_{i,k}$.
This utility is realized when she reports $\values_i$, and everyone else ranks the items in the opposite order, i.e. $v_{i',j} < v_{i',j+1}$ for all $i'$, and it cannot be improved upon.
\end{proof}

\noindent The following corollary (in direct contrast to the results of~\cite{amanatidis2017truthful} for truthful mechanisms) follows immediately.

\begin{corollary}
There exists a deterministic, not obviously manipulable mechanism that achieves envy-freeness up to one item.
\end{corollary}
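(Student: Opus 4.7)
The plan is to exhibit Round-Robin as the required witness mechanism, so the proof is essentially a one-line assembly of two ingredients. First I would observe that Round-Robin is a deterministic mechanism: given any profile of reported valuations $\bids$, it produces a single integral allocation by iterating through the agents in the fixed order and, at each turn, assigning the active agent her most-preferred remaining item (with ties broken according to a fixed rule). There is nothing randomized in this procedure, so determinism is immediate.

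Second, I would invoke Theorem~\ref{thm:rr is nom}, which was just established, to conclude that Round-Robin is not obviously manipulable. That gives the NOM property for free.

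The only remaining piece is the EF1 guarantee on the output. I would cite the standard fact that for additive valuations, the Round-Robin allocation is always EF1: intuitively, agent $i$'s bundle $A_i$ dominates, item-by-item in pick order, the items that agent $j$ (with $j>i$) picks after agent $i$'s first pick, and removing $j$'s first-round pick eliminates the one unfavorable comparison when $j<i$. This is a well-known result in the fair division literature and does not require a new argument here.

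No step is really an obstacle: the theorem does the heavy lifting, and the EF1 guarantee of Round-Robin is textbook. The corollary is therefore a direct consequence of Theorem~\ref{thm:rr is nom} together with the fact that Round-Robin always produces EF1 allocations under additive valuations.
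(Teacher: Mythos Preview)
Your proposal is correct and mirrors the paper's own treatment: the corollary is stated as following immediately from Theorem~\ref{thm:rr is nom}, with Round-Robin serving as the witness mechanism (deterministic by construction, NOM by the theorem, and EF1 by the well-known fact you cite). There is nothing to add or correct.
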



In fact, we can strengthen this result by proving there is a ``best-of-both-worlds'' mechanism that is not obviously manipulable. Specifically, we will show that the PS-Lottery algorithm of \cite{aziz2020simultaneously}, which outputs randomized allocations that are ex-ante EF and ex-post $\EFone$, is not obviously manipulable in expectation. 

The PS-Lottery algorithm is based on the well-known probabilistic serial algorithm, which outputs fractional allocations that are envy-free.  On a high level, the PS-Lottery algorithm uses Birkhoff's algorithm\footnote{Recall that, Birkhoff's algorithm, given a square bistochastic matrix, decomposes it into a convex combination (or a lottery) over permutation matrices.} to implement the fractional allocation output by probabilistic serial as a randomized allocation (a lottery) over a set of $\EFone$ allocations. For the sake of completeness, the PS-Lottery algorithm is formally described in Appendix~\ref{appendix:PS-Lottery}. 

We begin by proving a lemma that highlights a connection between not obvious manipulability and randomized mechanisms that output ex-ante proportional allocations. We note that a similar observation is made in~\cite{ortega2019obvious} in the context of cake cutting.

\begin{lemma}
\label{lemma:ex-ante-EF-and-NOM}
Inequality~\ref{inequality:NOM-worst-case} (the worst-case guarantee) is satisfied for every randomized mechanism $\mathcal{R}$ that outputs ex-ante proportional allocations.
\end{lemma}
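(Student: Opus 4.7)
My plan is to sandwich both sides of inequality~\ref{inequality:NOM-worst-case} by the proportional share $\frac{1}{n}\values_i(\items)$, using ex-ante proportionality once to lower-bound the left-hand side and once (cleverly instantiated) to upper-bound the right-hand side.

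First, for the left-hand side: whenever agent $i$ reports truthfully, ex-ante proportionality applied to agent $i$ (whose reported valuation coincides with her true valuation $\values_i$) immediately yields
\[
\values_i(\mathcal{R}_i(\values_i,\values_{-i})) \;\geq\; \tfrac{1}{n}\values_i(\items)
\]
for every choice of $\values_{-i}$. Taking the minimum over $\values_{-i}$ preserves the bound.

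Next, and this is the key step, for the right-hand side I will exhibit a particular $\values_{-i}$ that achieves an upper bound of $\tfrac{1}{n}\values_i(\items)$. The trick is to let every other agent report agent $i$'s true valuation, i.e. take $\values_{-i}=(\values_i,\ldots,\values_i)$. Let $X=\mathcal{R}(\bids_i,\values_i,\ldots,\values_i)$ be the resulting expected fractional allocation. Since each agent $k\ne i$ reports $\values_i$, ex-ante proportionality applied to each such $k$ gives $\values_i(X_k) \geq \tfrac{1}{n}\values_i(\items)$. Summing over the $n-1$ non-manipulators,
\[
\sum_{k \ne i} \values_i(X_k) \;\geq\; \tfrac{n-1}{n}\values_i(\items).
\]
On the other hand, since $X$ is a feasible fractional allocation, $\sum_{k\in\agents} X_{k,j} \leq 1$ for every item $j$, and hence by additivity $\sum_{k\in\agents}\values_i(X_k) \leq \values_i(\items)$. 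Subtracting the two displayed inequalities yields $\values_i(X_i) \leq \tfrac{1}{n}\values_i(\items)$, so
\[
\min_{\values_{-i}} \values_i(\mathcal{R}_i(\bids_i,\values_{-i})) \;\leq\; \values_i(X_i) \;\leq\; \tfrac{1}{n}\values_i(\items).
\]
Chaining the two bounds through the common value $\tfrac{1}{n}\values_i(\items)$ gives inequality~\ref{inequality:NOM-worst-case}.

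The only slightly subtle point is making sure that ex-ante proportionality is applied with respect to the reported valuations, not the true ones: that is precisely why we force the non-manipulators to report $\values_i$ so that their guaranteed proportional share, measured in their reports, coincides with agent $i$'s true value for their bundle. No further structural assumption on $\mathcal{R}$ (such as Pareto efficiency or symmetry) is required; feasibility plus ex-ante proportionality suffices.
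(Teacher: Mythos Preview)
Your proof is correct and follows essentially the same approach as the paper's own proof: both lower-bound the truthful worst case by $\tfrac{1}{n}\values_i(\items)$ via ex-ante proportionality for agent $i$, and both upper-bound the deviating worst case by instantiating $\values_{-i}=(\values_i,\ldots,\values_i)$ and summing the proportionality guarantees of the $n-1$ non-manipulators. If anything, your use of the feasibility inequality $\sum_{k}\values_i(X_k)\le \values_i(\items)$ is slightly more careful than the paper's implicit assumption of equality.
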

\begin{proof}
Let $\mathcal{R}$ be a randomized mechanism that outputs ex-ante proportional allocations. Consider an agent $i \in \agents$ with true valuation $\values_i$.

Suppose agent $i$ reports her true valuation $\values_i$. Since the mechanism outputs ex-ante proportional allocations, for every possible reports of other agents, $\bids_{-i}$, the expected fractional allocation output by $\mathcal{R}$, $\allocy = (\allocy_1, \allocy_2, \ldots, \allocy_n)$ will be such that $\values_i(\allocy_i) \geq \frac{1}{n} \values_i(\items)$. As a consequence, when agent $i$ reports $\values_i$, her worst-case expected utility is at least $\frac{1}{n} \values_i(\items)$. 

Next, we show that when every agent $j \in \agents \setminus \{i\}$ reports her valuation to be $\values_i$ (the true valuation of agent $i$), then the worst-case expected utility of agent $i$, as per her true valuation and irrespective of her report, will be at most $\frac{1}{n} \values_i(\items)$, which implies the lemma. Consider the case when every agent $j \in \agents \setminus \{i\}$ reports her valuation to be $\bids_j = \values_i$ and agent $i$ reports valuation $\bids_i$. We know that the expected fractional allocation $\allocy' = (\allocy'_1, \allocy'_2, \ldots, \allocy'_n)$ returned by $\mathcal{R}$ will be proportional, i.e., for each agent $j \in \agents \setminus \{i\}$ we have $\bids_j(\allocy'_j) = \values_i(\allocy'_j) \geq \frac{1}{n} \values_i(\items)$. Adding up over all $j \in \agents \setminus \{i\}$ we have that $\sum_{j \in \agents \setminus \{i\}} \values_i(\allocy'_j) = \values_i(\items) - \values_i(\allocy'_i) \geq \frac{n-1}{n} \values_i (\items)$. On rearranging we get the required inequality $\values_i(\allocy'_i) \leq \frac{1}{n} \values_i(\items)$.
\end{proof}

Using Lemma~\ref{lemma:ex-ante-EF-and-NOM} we can prove the following theorem.

\begin{theorem}\label{best-of-both-worlds-and-NOM}
The PS-Lottery algorithm of~\cite{aziz2020simultaneously}, which is ex-ante envy-free and ex-post $\EFone$, is not obviously manipulable in expectation.
\end{theorem}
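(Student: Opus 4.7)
The plan is to check both inequalities of Definition~\ref{dfn: nom} for the randomized allocation produced by PS-Lottery. By construction, the expected fractional allocation of PS-Lottery agrees with the fractional allocation $X=(X_1,\ldots,X_n)$ output by the probabilistic serial (PS) procedure on the reported valuations, so expected true utilities under PS-Lottery equal $\values_i(X_i)$. Since PS-Lottery is ex-ante envy-free, it is in particular ex-ante proportional, and so Lemma~\ref{lemma:ex-ante-EF-and-NOM} immediately yields the worst-case guarantee~\ref{inequality:NOM-worst-case}. All the substance therefore lies in verifying the best-case inequality~\ref{inequality:NOM-best-case}.

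For~\ref{inequality:NOM-best-case}, the key invariant I would use is that under PS every agent consumes items at unit rate for a total time of $m/n$, regardless of what anyone reports; hence the fractional bundle $X_i$ returned to agent $i$ always satisfies $\sum_{j \in \items} X_{i,j} = m/n$ and $X_{i,j}\in[0,1]$. Fix agent $i$ with true valuation $\values_i$ and, without loss of generality, order items so that $v_{i,1}\geq v_{i,2}\geq \cdots \geq v_{i,m}$. Any feasible fractional bundle of mass $m/n$ has true value at most
\[
U^\star \;=\; \sum_{k=1}^{\lfloor m/n \rfloor} v_{i,k} \;+\; \Big(\tfrac{m}{n} - \big\lfloor \tfrac{m}{n}\big\rfloor\Big)\, v_{i,\lceil m/n\rceil},
\]
so $\max_{\values_{-i}} \values_i(\mathcal{R}_i(\bids_i, \values_{-i})) \le U^\star$ for \emph{every} report $\bids_i$, in particular for any misreport.

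It remains to exhibit $\values_{-i}$ for which truthful agent $i$ actually attains $U^\star$. I would have each agent $j\neq i$ report a valuation supported on agent $i$'s bottom items $\{\lfloor m/n\rfloor+1,\ldots,m\}$; this set has mass $(n-1)\cdot m/n$, so the other $n-1$ agents together can consume exactly from it. Under these reports and truthful $\values_i$, in PS agent $i$ never competes with anyone on her top items, consumes items $1,2,\ldots$ in decreasing order of true value, fills her budget of $m/n$ units, and achieves true expected utility exactly $U^\star$. Combined with the upper bound above this gives $\max_{\values_{-i}} \values_i(\mathcal{R}_i(\values_i, \values_{-i})) = U^\star \geq \max_{\values_{-i}} \values_i(\mathcal{R}_i(\bids_i, \values_{-i}))$, which is inequality~\ref{inequality:NOM-best-case}.

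The main obstacle is the careful invocation of the ``unit-rate for total time $m/n$'' property of PS, which is standard but must be stated with some care at the boundary (when $n\nmid m$ the last consumed item is fractional), together with the observation that the Birkhoff-style rounding inside PS-Lottery preserves the fractional allocation in expectation, so that best/worst-case \emph{expected} utilities of PS-Lottery coincide with utilities of the PS fractional allocation. Everything else reduces to the one-line linear-programming upper bound on $\values_i(X_i)$ subject to $\sum_j X_{i,j}=m/n$.
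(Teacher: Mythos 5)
Your proposal is correct and follows essentially the same route as the paper: inequality (1) via Lemma~\ref{lemma:ex-ante-EF-and-NOM} using ex-ante proportionality, and inequality (2) via the invariant that every agent's fractional bundle has total mass $m/n$, which caps the best-case expected utility at the greedy value $U^\star$ attained under truthful reporting. The only cosmetic difference is your choice of witness $\values_{-i}$ (others supported on agent $i$'s bottom items rather than fully reversed preference orders), which serves the same purpose.
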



\begin{proof}
Inequality~\ref{inequality:NOM-worst-case} is implied by Lemma~\ref{lemma:ex-ante-EF-and-NOM}, since the PS-Lottery algorithm is ex-ante envy-free (see Appendix \ref{appendix:PS-Lottery}), and therefore ex-ante proportional.
It remains to prove Inequality~\ref{inequality:NOM-best-case}.

First, it holds that the expected fractional allocation returned by the PS-Lottery algorithm, irrespective of the agents' reports, is such that each agent receives $\frac{m}{n}$ (fractional) items in total; see Property \ref{PS-Lottery-property-2} in Appendix \ref{appendix:PS-Lottery}. Second, the best-case for an agent $i \in \agents$ who, without loss of generality, values items in the order $v_{i,1} \geq v_{i,2} \ldots \geq v_{i,m}$, and reports honestly, occurs when the reported valuation of other agents induce an opposite preference order on items, i.e., for each agent $j \in \agents \setminus \{i\}$ we have $b_{j,m} \geq b_{j,m-1} \ldots \geq b_{j,1}$. In this case, agent $i$ would receive items $1, 2 ,\dots, \lfloor \frac{m}{n} \rfloor$ in their entirety and a $\frac{m}{n} - \lfloor \frac{m}{n} \rfloor$ fraction of item $\lfloor \frac{m}{n} \rfloor + 1$. This allocation results in the maximum possible expected utility that agent $i$ can get subject to the constraint that she gets $\frac{m}{n}$ fraction of items, and therefore it cannot be improved upon, no matter what her report is.
\end{proof}

\section{Efficient Mechanisms}\label{sec: efficient}
We proceed to study whether natural efficiency notions are compatible with non obvious manipulability. 
We consider the three most popular notions of efficiency: utilitarian social welfare (sum of agents' utilities), egalitarian social welfare (the minimum utility of any agent), and Nash social welfare (geometric mean of agents' utilities). Note that for truthfulness, the only \emph{Pareto} efficient and truthful algorithm is a dictatorship, which immediately implies that one cannot truthfully achieve any non-trivial approximations with respect to any of these notions. 

\subsection{Utilitarian Social Welfare}\label{subsubsec: SW}

We start by considering the utilitarian social welfare maximizing algorithm, i.e. the algorithm which allocates each item to the agent that values it the most. When the winner for an item is not unique the algorithm needs to break ties; the choice of the tie-breaking rule will be crucial for our positive result in this section.

In the context of fair division it is standard to assume that agents' values are normalized when analyzing utilitarian social welfare. Specifically, the most common assumption is that the agents' values add up to $1$; see~\cite{aziz2020justifications} for a number of justifications for this assumption.
So, for the remainder of Section~\ref{subsubsec: SW} we will assume that $\sum_{j \in \items} v_{i,j} = 1$ for all $i \in \agents$.\footnote{Equivalently, we can define the utilitarian social welfare maximizing algorithm to be the aforementioned algorithm executed on transformed valuations where $\hat{v}_{i,j} = \frac{v_{i,j}}{\sum_{j \in \items} v_{i,j}}$ for all $i \in \agents$.} We note that without this assumption, the utilitarian social welfare maximizing algorithm is not obviously manipulable, since no matter what an agent reports, the best case for her is that she wins all items she values positively, and the worst case is that she loses all items, so no misreport can help increase either the worst-case or best-case utility.

\begin{theorem}
\label{theorem:NOM+SWM+2agents}
Every (randomized or deterministic) mechanism for $n=2$ agents that always outputs utilitarian social welfare maximizing allocations is obviously manipulable.
\end{theorem}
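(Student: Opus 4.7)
The plan is to show that for any (possibly randomized) mechanism $\mathcal{M}$ which always outputs a utilitarian-social-welfare maximizing allocation for $n=2$ agents with normalized bids, there is a true valuation and a misreport that violate Inequality~\ref{inequality:NOM-worst-case} of Definition~\ref{dfn: nom}. I will work with $m=2$ items; the case $m\ge 3$ reduces to this by setting all true values and all the envisaged bids on items $3,\ldots,m$ to zero, so that the analysis takes place inside a two-item sub-instance.

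The key profile is $\bids_1 = \bids_2 = (1, 0)$, at which both agents concentrate all bid weight on item~$1$. Here every one of the four integer allocations has total welfare $1$ and is therefore SW-maximizing, so $\mathcal{M}$ has complete freedom. Let $x \in [0,1]$ denote the marginal probability that $\mathcal{M}$ gives item~$1$ to agent~$1$ at this profile, and $y \in [0,1]$ the analogous marginal for item~$2$. The argument dichotomizes on $x$: in Case~$x = 1$ the designated manipulator will be agent~$1$, and in Case~$x < 1$ it will be agent~$2$. In both cases, the manipulator's true valuation is $(1 - \epsilon, \epsilon)$ for a sufficiently small $\epsilon \in (0,1/2)$, and the envisaged misreport is $(1, 0)$.

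For the truthful worst-case upper bound, I observe that if the opponent reports $(1 - \epsilon/2,\, \epsilon/2)$ she strictly outbids the manipulator on item~$1$ and is strictly outbid on item~$2$, so SW-maximization uniquely hands the manipulator only item~$2$ (worth $\epsilon$); hence the truthful worst case is at most $\epsilon$. For the misreport $(1,0)$, any opponent bid $(b,1-b)$ with $b<1$ leaves item~$1$ strictly won by the manipulator (true utility $1-\epsilon$), and the only remaining opponent bid is $(1,0)$ itself, at which the manipulator's expected true utility equals $x(1-\epsilon) + y\epsilon$ if she is agent~$1$, and $(1-x)(1-\epsilon) + (1-y)\epsilon$ if she is agent~$2$. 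When $x = 1$, agent~$1$'s tie-profile utility is at least $1-\epsilon$; when $x < 1$, agent~$2$'s tie-profile utility is at least $(1-x)(1-\epsilon)$, which exceeds $\epsilon$ whenever $\epsilon < (1-x)/2$. Either way, the worst case under the misreport strictly exceeds the truthful worst case of $\epsilon$, giving the violation of Inequality~\ref{inequality:NOM-worst-case}.

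The main obstacle is identifying a single tie profile whose freedom simultaneously probes both agents' incentives. The profile $((1,0),(1,0))$ works precisely because all four integer allocations there are SW-maximizing, so the marginal $x$ is a genuine free parameter in $[0,1]$; the content of the argument is then that \emph{whichever} way $\mathcal{M}$ resolves this freedom, one of the two agents can shield her value for her second-favored item by bidding $(1,0)$ and thereby guarantee her favorite item in the worst case---strictly dominating truthful reporting, under which the opponent can always ``steal'' the favorite by marginally overbidding. The randomized case requires no additional work, since only the marginals $x$ and $y$ enter the bounds.
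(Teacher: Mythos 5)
Your proof is correct and takes essentially the same approach as the paper's: probe the mechanism at the tie profile $((1,0),(1,0))$ to identify a favored agent, then show that this agent can protect her favorite item in the worst case by bidding $(1,0)$, whereas under truthful reporting the opponent can steal it by marginally overbidding. The only difference is cosmetic---you split on $x=1$ versus $x<1$ and let $\epsilon$ depend on the mechanism's tie-breaking probability, while the paper fixes the true valuation $(\tfrac{2}{3}+\epsilon,\tfrac{1}{3}-\epsilon)$ and designates the agent who wins item $1$ with probability at least $\tfrac{1}{2}$.
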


\begin{proof}
We prove the theorem for the case of two items; the proof can be easily generalized to hold for any number of items.
Let $\mathcal{M}$ be the utilitarian social welfare maximizing algorithm, coupled with any tie-breaking rule. 
Assume that both agents report a value of $1$ for the first item, and zero for the other item, i.e. $b_{i,1} = 1$ and $b_{i,2} = 0$ for $i \in \{ 1, 2 \}$.
Given these reports, there must be an agent who gets item $1$ with probability at least $\frac{1}{2}$ (if $\mathcal{M}$ is deterministic then this probability will be exactly $1$); assume that this is agent $1$, without loss of generality.

Now, consider the case when the true valuation of agent $1$ is $\values_1 = (\frac{2}{3}+\epsilon, \frac{1}{3} - \epsilon)$, for some small $\epsilon > 0$. Additionally, suppose that agent $1$ reports her true valuation (i.e., $\bids_1 = \values_1$) and agent $2$'s reported value for item $1$, $b_{2,1} > \frac{2}{3}+\epsilon$. In this case, the utilitarian social welfare maximizing allocation gives the first item to agent $2$. Therefore, the worst-case utility of agent $1$ when she reports her true valuation is at most $\frac{1}{3} - \epsilon$. Next, consider the dishonest report $\bids_1 = (b_{1,1}, b_{1,2}) = (1,0)$. Given this, if $b_{2,1} < 1$, then agent $1$ gets the first item for a utility of $\frac{2}{3}+\epsilon$. Otherwise if $b_{2,1} = 1$, then agent $1$ gets the first item with probability at least $1/2$ (this follows from our choice of agent $1$), thus, her expected utility is at least $\frac{1}{2}(\frac{2}{3} + \epsilon) =  \frac{1}{3} + \frac{\epsilon}{2}$. In either case, her utility is strictly larger than $\frac{1}{3} - \epsilon$, her worst-case utility under honest reporting.
Therefore, $\mathcal{M}$ is obviously manipulable.
\end{proof}

Surprisingly, this impossibility result does not hold for more than two agents. 

\begin{theorem}
\label{theorem:NOM+SWM+nagents}
The utilitarian social welfare maximizing algorithm, coupled with an appropriate tie-breaking rule, is not obviously manipulable for $n \geq 3$ agents.
\end{theorem}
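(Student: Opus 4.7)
My plan is to exhibit a tie-breaking rule for the utilitarian social welfare maximizing algorithm and verify the two NOM inequalities. For the best-case inequality, I would rely on the fact that normalization forbids agent $i$ from winning all $m$ items: if she did, summing $b_{i,j} > \max_{i' \neq i} b_{i',j}$ over $j$ would yield $1 > \sum_j \max_{i' \neq i} b_{i',j} \geq 1$, a contradiction (the last inequality uses $\sum_j b_{i',j} = 1$ for any one $i' \neq i$). Hence her true utility is at most $1 - \min_j v_{i,j}$, and this bound is attained under truthful reporting by the cooperative adversary strategy ``concentrate all mass on $\argmin_j v_{i,j}$,'' which makes agent $i$ the unique max bidder on every other item. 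Consequently no dishonest report can exceed the truthful best case.

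For the worst-case inequality, the key observation for $n \geq 3$ is that the adversary's total bid budget $n - 1 \geq 2 > 1$ strictly exceeds agent $i$'s, which lets it strictly outbid her on every item simultaneously: split the items into two sets each of $v_i$-mass strictly less than $1$ and have two adversaries strictly cover each. This forces the truthful worst case to $0$; the same construction with $b_i$ in place of $v_i$ also forces the dishonest worst case to $0$ whenever $|\supp(b_i)| \geq 2$. The only remaining case is when $b_i$ is a spike, i.e., $b_{i, j^\star} = 1$ for some item $j^\star$, since then no adversary bid on $j^\star$ can be strictly greater than $1$.

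To handle the spike I would use the following tie-breaking rule: among a proper subset $T \subsetneq [n]$ tied on item $j$, order $T$ by agent index and pick position $\lfloor b_{i^\star, j} \cdot |T| \rfloor$ where $i^\star$ is the smallest-indexed agent outside $T$ (falling back to uniform random only when $T = [n]$). In the spike case the adversary has one of its $n - 1 \geq 2$ agents also spike on item $j^\star$ so that $|T| = 2$ and $i^\star$ exists and is an adversary, and uses the remaining adversary as a referee, bidding some $\alpha \in [0, 1)$ on item $j^\star$ and $(1 - \alpha)/(m - 1) > 0$ on every other item; here $\alpha$ is chosen so that $\lfloor \alpha \cdot 2 \rfloor$ points to the tied adversary rather than $i$, while the positive bids on the remaining items strictly outbid agent $i$'s zeros there. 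The $n = 2$ impossibility breaks exactly here: with only one adversary no referee is available, forcing a symmetric random tie-break that gives agent $i$ positive winning probability. The main obstacle is designing the rule so that it (i) selects only from SW-maximizing allocations, (ii) does not systematically privilege any agent (otherwise that agent could manipulate by spiking, as in the $n = 2$ case), and (iii) is sensitive enough to non-tied bids for the adversary to steer ties; my proposal threads these because intra-$T$ ordering is only a secondary tiebreaker while the primary decision uses a non-tied agent's bid, which is always an adversary's since $i \in T$.
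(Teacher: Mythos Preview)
Your worst-case argument is sound, but the best-case argument has a genuine gap. You claim that normalization forbids agent $i$ from winning all $m$ items because ``summing $b_{i,j} > \max_{i' \neq i} b_{i',j}$ over $j$'' yields a contradiction. However, this strict inequality only holds when agent $i$ is the \emph{unique} highest bidder on every item; your own tie-breaking rule allows her to win items on which she is merely tied (deterministically when $T_j \subsetneq [n]$, and with probability $1/n$ when $T_j = [n]$). In those cases only the weak inequality $b_{i,j} \geq b_{i',j}$ holds, and summing gives $1 \geq 1$, no contradiction. Even granting that agent $i$ cannot win every item with probability one, the inference ``hence her true utility is at most $1 - \min_j v_{i,j}$'' does not follow directly for a randomized mechanism: a priori she could win her least-valued item with probability $1 - \epsilon$ and everything else with probability $1$, for expected utility $1 - \epsilon\cdot\min_j v_{i,j} > 1 - \min_j v_{i,j}$. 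Your rule \emph{does} happen to satisfy the bound (one can show that if agent $i$ never loses an item deterministically then every $T_j$ must equal $[n]$, forcing expected utility $1/n \leq 1 - \min_j v_{i,j}$ for $m\ge 2$), but this requires an argument you have not supplied.

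More importantly, your route is far more elaborate than necessary. The paper dispenses with the referee machinery entirely by using a \emph{fixed} deterministic priority: smallest index wins every tie, with the single exception that a tie consisting of exactly $\{1,n\}$ goes to agent $n$. Your worry that a rule which ``systematically privileges'' an agent would let that agent profitably spike is misplaced: the paper's rule \emph{does} privilege agent $1$, yet she too can be made to lose her spike --- the adversary has agent $n$ spike alongside her and a third agent bid positively on the remaining items, so the tie on $j^\star$ is exactly $\{1,n\}$ and goes to $n$. The key property is not symmetry but merely that every agent loses \emph{some} two-agent tie that the adversary (controlling $n-1 \geq 2$ agents) can engineer while still covering the remaining items; for $n \geq 3$ a single carefully placed exception to a lexicographic order suffices, yielding a deterministic mechanism and a two-line worst-case proof.
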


\begin{proof}
First, we describe our algorithm and tie-breaking rule. The algorithm, given reports $\bids_1, \dots, \bids_n$, allocates each item to an agent with the largest reported value for that item. In case of a tie, the item is allocated to the agent with the smallest index (i.e. $1 \succ 2 \succ 3 \dots$), \emph{except} if the tie is exactly between agents $1$ and $n$, in which case the item should go to agent $n$. Towards showing that this algorithm is NOM, consider an arbitrary agent $i$ with true valuation $\values_i$.

Towards proving inequality~\ref{inequality:NOM-worst-case}, the worst-case guarantee, we have that if $b_{i,j} < 1$ for all $j \in \items$, it is easy to see that there exists a choice for $\values_{-i}$ such that agent $i$ does not have the highest value for any item (e.g., some agent $k$ could have a value of $1$ for $i$'s favorite item, and a different agent $k'$ can out-bid $i$ in all remaining items). If $b_{i,j} = 1$ for some $j \in \items$, then, again, $i$ can again end up with no items, since by the choice of our tie-breaking rule, all agents can lose in the tie-breaking (agents other than $1$ lose to some smaller index agent, while agent $1$ loses to agent $n$). Therefore, no matter what agent $i$ reports, in the worst-case she gets a utility of zero.

Next we prove inequality~\ref{inequality:NOM-best-case}, the best-case guarantee. By the normalization assumption and tie-breaking rule, agent $i$ cannot out-bid every single agent on every single item, no matter what $\bids_i$ and $\values_{-i}$ are. Therefore, the best-case outcome for agent $i$ is that she is allocated all items except her least favorite one. Finally, notice that this outcome can be realized when reporting honestly, if $\values_{-i}$ is such that $v_{i',j} = 1$, for all $i' \neq i$, where $j$ is the item that $i$ values the least (i.e. $j \in argmin_{k \in \items} v_{i,k})$.
\end{proof}


\subsection{Egalitarian Social Welfare}

In this section we will show that any mechanism --- randomized or deterministic --- that maximizes  egalitarian social welfare (the minimum utility among agents) is obviously manipulable. 
This result rules out the existence of NOM mechanisms that output leximin allocations\footnote{An allocation is leximin iff it maximizes the lowest utility, and subject to that the second lowest utility and so on.}, since leximin allocations are, by definition, utilitarian social welfare maximizing. 

Similar to utilitarian social welfare, when discussing egalitarian social welfare, we 
will assume that the valuations of agents are normalized: for every agent, the combined value for the set of all items is $1$; see~\cite{aziz2016egalitarianism} for a thorough discussion. 


For the case of $n=2$ agents, there are no NOM mechanisms that maximize egalitarian social welfare for that case; this follows from Theorem~\ref{theorem:NOM-negative-result-combo} in Section~\ref{sec: best of both}. The following theorem establishes that this impossibility continues to hold for the case of $n \geq 3$ agents.

\begin{theorem}\label{thm: NOM + egalitarian + n agents}
Every (randomized or deterministic) mechanism that always outputs an allocation which maximizes the egalitarian social welfare is obviously manipulable, even for $n=3$ agents and $m = 4$ items.
\end{theorem}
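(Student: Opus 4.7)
The plan is to exploit the defining feature of egalitarian-welfare-maximizing allocations, namely that they first maximize the number of agents with strictly positive utility and only then maximize the minimum utility among them. This ``positive-count-first'' priority is what makes concentrated reports a powerful manipulation lever.

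Fix an arbitrary mechanism $\mathcal{M}$ that always outputs an egalitarian-welfare-maximizing allocation. I would exhibit an agent, a true valuation, and a deviation that together violate NOM. I would first analyze $\mathcal{M}$ on the symmetric input $\bids = (v,v,v)$ with $v = (1/2, 1/6, 1/6, 1/6)$. Because any complete allocation giving every agent at least one item is a $1+1+2$ partition, the maximum number of positive-utility agents is $3$ and the max-min reported utility is exactly $1/6$ (witnessed by an agent who receives a singleton from $\{2,3,4\}$). By pigeonhole applied to $\mathcal{M}$'s output on this input, some agent (WLOG agent $1$, after relabeling) is assigned such a singleton with non-trivial probability, so her true utility on this input equals $1/6$. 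Setting $v_1 = v$, this caps the truthful worst-case expected utility at $1/6$.

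I would then consider the deviation $b_1 = (1,0,0,0)$. The structural observation driving the proof is: whenever $b_{-1}$ is such that the maximum number of positive-utility agents on input $(b_1, b_{-1})$ requires agent $1$ to receive item $1$ (which happens precisely because, under $b_1$, she has positive reported utility only if she gets item $1$), $\mathcal{M}$ is forced to allocate item $1$ to her. In every such configuration her true utility is at least $v_{1,1} = 1/2 > 1/6$, so the deviation strictly improves the worst-case utility on these $b_{-1}$'s.

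The main obstacle is the adversarial choice $b_{-1}$ that drives the maximum number of positive-utility agents down to $1$, for instance $b_2 = b_3 = (1,0,0,0)$, since then $\mathcal{M}$ is unconstrained on items $\{2,3,4\}$ and on which of the three agents actually receives item $1$. To handle this, I would apply a second pigeonhole argument to $\mathcal{M}$'s output on the input $((1,0,0,0),(1,0,0,0),(1,0,0,0))$: some agent must receive item $1$ with probability at most $1/3$, and by relabeling I may assume this to again be agent $1$. A careful case analysis on the mechanism's tie-breaking — for instance, splitting according to whether $\mathcal{M}$ assigns agent $1$ at least two items from $\{2,3,4\}$ on this input (giving true utility at least $1/3 > 1/6$) or not — combined with a possible fallback to the best-case inequality of NOM using an alternative deviation such as $b_1 = (0,1,0,0)$ to exploit asymmetry in the items, will yield the desired strict improvement.

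The most delicate step I expect to wrestle with is aligning the two pigeonhole relabelings so that the same agent is disadvantaged simultaneously on both symmetric inputs, and handling ``corner'' mechanisms whose tie-breaking exactly equalizes agent $1$'s worst-case utilities under truthful and deviating reports; in those subcases I expect to need the best-case NOM inequality instead, showing that the deviation's best-case output strictly dominates that of truthful reporting due to how egal-max's positive-count priority interacts with concentrated reports.
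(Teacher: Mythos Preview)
Your approach has a genuine gap that cannot be patched along the lines you sketch. The failure point is the deviation $b_1 = (1,0,0,0)$: concentrating on a \emph{single} item gives the adversary an easy counter. Take, for instance, $b_2 = (1,0,0,0)$ and $b_3 = (0,\tfrac13,\tfrac13,\tfrac13)$. Then the maximum number of positive-utility agents is two (only one of agents $1,2$ can receive item $1$), and every egalitarian-maximizing allocation gives item $1$ to agent $1$ or $2$ and all of $\{2,3,4\}$ to agent $3$. A mechanism whose tie-breaking hands item $1$ to agent $2$ here leaves agent $1$ with the empty bundle, true utility $0$. So the worst case under your deviation can be $0$, strictly below any truthful worst case. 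Your proposed alignment of two pigeonhole relabelings cannot rescue this, because the mechanism may behave arbitrarily differently on the two symmetric inputs $(v,v,v)$ and $((1,0,0,0),(1,0,0,0),(1,0,0,0))$; nothing forces the same agent to be disadvantaged on both. The best-case fallback also fails: with $v_1=(\tfrac12,\tfrac16,\tfrac16,\tfrac16)$, truthful reporting already attains best-case utility $\tfrac56$ (agent $1$ receives $\{1,2,3\}$ when $b_2=b_3=(0,0,0,1)$), and since normalization forces each other agent to value some item, agent $1$ can never receive all four items under any report, so no deviation strictly improves the best case.

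The idea you are missing is to make the deviation report positive value on \emph{many} items---specifically, on $n$ of the $n{+}1$ items---so that pigeonhole \emph{guarantees} agent $1$ one of them regardless of $b_{-1}$ and tie-breaking: if she receives none, the remaining $n{-}1$ agents share those $n$ items, someone holds at least two, and transferring one to agent $1$ strictly increases the positive-utility count. The paper instantiates this with $v_1=(0.3,0.3,0.3,0.1)$ and deviation $b_1=(\tfrac13,\tfrac13,\tfrac13,0)$: under the deviation agent $1$ is guaranteed one of items $\{1,2,3\}$ for true utility at least $0.3$, while a carefully chosen $b_{-1}$ (namely $b_2=(0,0,1,0)$, $b_3=(0.05,0.05,0.9,0)$) makes the egalitarian-maximizing allocation \emph{unique} and forces agent $1$ to receive only item $4$ under truthful reporting, true utility $0.1$. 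Uniqueness also handles randomized mechanisms cleanly, whereas your symmetric input $(v,v,v)$ yields only an expected-utility bound of $\tfrac13$, not $\tfrac16$.
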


We defer the proof of Theorem~\ref{thm: NOM + egalitarian + n agents} to Appendix~\ref{app:missing}.

\subsection{Nash Social Welfare}

The next natural maximization objective, and arguably the most popular one in the context of fair division, is the Nash social welfare. Here, we show that --- unlike utilitarian social welfare maximization --- there does not exist a not obviously manipulable mechanism that always outputs allocations that maximize the Nash social welfare, for any number of agents. The case of $n=2$ agents follows from Theorem~\ref{theorem:NOM-negative-result-combo} in Section~\ref{sec: best of both}. In the following theorem, we show that the impossibility extends to $n \geq 3$ agents. 

\begin{theorem}\label{thm: MNW impossibility}
Every (randomized or deterministic) mechanism $\mathcal{M}$ that always outputs an allocation that maximizes the Nash social welfare is obviously manipulable, even for $n=3$ additive agents and $m=4$ items.
\end{theorem}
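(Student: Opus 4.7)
My plan is to fix an arbitrary mechanism $\mathcal{M}$ that always outputs MNW allocations (deterministic or randomized) and construct, on an instance with $n=3$ agents and $m=4$ items, a true valuation $\values_i$ together with a misreport $\bids_i$ for some agent $i$ that violate one of the inequalities in Definition~\ref{dfn: nom}. The core insight, echoing the strategies behind Theorems~\ref{theorem:NOM+SWM+2agents} and~\ref{thm: NOM + egalitarian + n agents}, is that MNW prioritizes the count of agents with positive utility \emph{above} the geometric mean of these utilities; an agent can therefore manipulate by misreporting to make her favorite item indispensable for her positive-utility status, coercing the mechanism into placing that item in her bundle.

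First, I identify an honest-reporting scenario exhibiting a low worst-case utility for some agent under $\mathcal{M}$. The natural candidate is the symmetric scenario in which all three agents report an identical valuation $v = (a, b, c, d)$ with $a > b \geq c \geq d > 0$. To achieve a count-$3$ MNW outcome the items must be partitioned into bundle-sizes $(2, 1, 1)$, so at least one agent receives a singleton. A pigeonhole argument (for deterministic $\mathcal{M}$) or an averaging argument over expected utilities (for randomized $\mathcal{M}$) picks out an agent $i^{*}$ whose worst-case utility under honest reporting is at most some $U^{*}$; by choosing $(a, b, c, d)$ suitably we ensure $U^{*} < a$.

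Second, I construct a misreport $\bids_{i^{*}}$ concentrating reported value on item $1$ (for example, $\bids_{i^{*}} = (1, 0, 0, 0)$, optionally augmented with a tiny $\varepsilon > 0$ on items $\{2, 3, 4\}$ to invoke non-wastefulness of MNW). The goal is to show that for every $\bids_{-i^{*}}$, the resulting MNW allocation gives $i^{*}$ expected true utility strictly greater than $U^{*}$. The case analysis hinges on whether $\bids_{-i^{*}}$ admits a count-$3$ MNW outcome: if yes, then $i^{*}$'s positivity demands item $1$, so it must be allocated to her, yielding true utility $a > U^{*}$; if no (e.g.\ other agents also concentrate on item $1$), then MNW is confined to count-$2$ or count-$1$ allocations, and a further argument combining MNW's geometric-mean optimization with non-wastefulness shows $i^{*}$ still secures a bundle of true value strictly exceeding $U^{*}$.

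The main obstacle is the ``mirrored'' sub-case in which $\bids_{-i^{*}}$ mimics $\bids_{i^{*}}$: an adversarial tie-breaking could then attempt to deny $i^{*}$ both item $1$ and items $\{2, 3, 4\}$, collapsing her true utility to zero. I would overcome this by (i) using the $\varepsilon$-reports on items $\{2, 3, 4\}$ to pin those items to $i^{*}$ through non-wastefulness, and (ii) exploiting the symmetry across the three (now indistinguishable) reporters: for any randomized $\mathcal{M}$ the probabilities of receiving item $1$ sum to $1$, so some agent gets probability $\geq 1/3$; choosing $v$ with $a/3 > U^{*}$ and selecting $i^{*}$ as such an agent yields the strict improvement. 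For deterministic $\mathcal{M}$, a brief case analysis on the tie-breaking rule in the symmetric scenario identifies which $i^{*}$ to target, and the same misreport works. This completes the obvious manipulation and proves the theorem.
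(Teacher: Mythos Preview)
Your plan has a genuine gap in the ``mirrored'' sub-case, and it cannot be repaired along the lines you sketch. The difficulty is that you are forced to select $i^{*}$ twice, under incompatible criteria. In the symmetric truthful scenario you pick $i^{*}$ to be an agent receiving a low-value bundle; in the mirrored misreport scenario you pick $i^{*}$ to be an agent receiving item~$1$ with probability at least $1/3$. Nothing prevents the mechanism's tie-breaking from being \emph{consistent} across these two symmetric profiles (e.g.\ always disfavoring agent~$3$), so the agent who is worst-off truthfully can also be the agent who is worst-off under the misreport. Concretely, with the misreport $\bids_{i^{*}}=(1,\varepsilon,\varepsilon,\varepsilon)$ mirrored by the other two agents, the unique MNW partition (for small $\varepsilon$) assigns item~$1$ alone to one agent and splits $\{2,3,4\}$ as $(2,1)$ among the others; an adversarial tie-break can hand $i^{*}$ a single item from $\{2,3,4\}$, i.e.\ true utility at most $d$. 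Since the minimum bundle in the truthful symmetric MNW partition has value strictly larger than $d$ (for instance with $v=(4,3,2,1)$ it equals $3$), the misreport \emph{lowers} the worst case. The variant $\bids_{i^{*}}=(1,0,0,0)$ is worse still: if all three report it, only one agent can have positive reported utility, and $i^{*}$ may receive nothing at all, for true utility~$0$. Non-wastefulness does not help here because all agents report positive value for every item (in the $\varepsilon$ variant) or the zero-valued items can be assigned arbitrarily (in the $0$ variant).

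The paper's proof avoids symmetry entirely. It fixes agent~$1$ with true valuation $\values_1=(3.9,\,3,\,2,\,0.9)$ and exhibits a specific $\bids_{-1}$ driving the truthful worst case down to exactly~$2$. The misreport is $\bids_1=(2,2,1,1)$, and the crux is a direct argument that \emph{every} MNW allocation under $(\bids_1,\bids_{-1})$, for \emph{all} $\bids_{-1}$, gives agent~$1$ a bundle of true value strictly exceeding~$2$: she must receive at least one item (by EF1), and she can never receive only item~$3$ or only item~$4$, as shown by a short case analysis combining EF1 with two product-improvement inequalities derived from MNW optimality. The key idea you are missing is a misreport that leverages the \emph{structure} of MNW (EF1 plus the multiplicative first-order conditions) to rule out all low-value singleton bundles for the manipulator, rather than a misreport that merely concentrates mass on one item.
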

\begin{proof}
We prove the theorem for the case of $n=3$ agents and $m=4$ items; we describe how our arguments can be adjusted to work for $n > 3$ agents and any number of items at the end of this proof. Let $\mathcal{M}$ be a (possibly randomized) mechanism that \emph{always} outputs a Nash social welfare maximizing allocation.

Let the true valuation of agent $1$ be $\values_1 = (3.9, 3, 2, 0.9)$. The subsequent proof has two parts: first, we will show that if agent $1$ reports her true valuation, then the worst-case utility is exactly $2$, and second, if agent $1$ reports $\bids_1 = (2, 2, 1, 1)$, then the worst-case would be strictly more than $2$; the theorem follows. \\

\noindent
\emph{Worst-case utility when reporting honestly:} Consider the case where agent $1$ reports her true valuation, i.e., the report $\bids_1 = \values_1$. We know that mechanism $\mathcal{M}$ must output Nash social welfare maximizing allocations, and such allocations are necessarily EF1 \cite{caragiannis2019unreasonable}; randomized $\mathcal{M}$ will output ex-post EF1 allocations. Consequently, agent $1$ must be allocated at least one item, since otherwise some other agent will receive at least two items, and agent $1$ will envy that agent even upon the removal of any one item. Furthermore, if agent $1$ is allocated only item $4$ the overall allocation can't be EF1, since agent $1$ will envy (even upon the removal of any item) the agent that gets two of the first three items ($v_{1,4}$ is smaller than all other values). Thus, in the worst-case, agent $1$ will get a bundle whose value is at least $2$, her value for item $3$. Next, we show that her worst-case utility is exactly equal to $2$.

Let the reported valuations of agent $2$ be $\bids_2 = (0, 1, 0, 0)$ and of agent $3$ be $\bids_3 = (2, 0, 0, 1)$. Given this, the Nash welfare maximizing allocation is unique, and this allocation is such that agent $1$ gets only item $3$, for a total utility of $2$. To see that this allocation is unique, first notice that item $2$ must go to agent $2$ (otherwise agent $2$'s utility, and therefore Nash social welfare, will be zero). Additionally, item $3$ must go to agent $1$ since she is the only agent with a non-zero value for it. The remaining items ($1$ and $4$), which are valued only by agents $1$ and $3$, must be allocated in a way that the Nash social welfare of the resultant allocation is maximized. A simple case analysis shows that the unique allocation that maximizes Nash social welfare gives both items to agent $3$. A consequence of having a unique Nash social welfare maximizing allocation is that mechanism $\mathcal{M}$ must output it, irrespective of whether $\mathcal{M}$ is randomized or deterministic.\\




\noindent
\emph{Worst-case utility when misreporting:} Consider the case where agent $1$ misreports her valuation as $\bids_1 = (2, 2, 1, 1)$. Given this report, the mechanism $\mathcal{M}$ must allocate at least one item to agent $1$; otherwise its allocation would not be EF1 (ex-post EF1 for if $\mathcal{M}$ is randomized). We will show that agent $1$'s allocation cannot be only item $3$ nor only item $4$ (i.e., a bundle of value $1$ with respect to the reported values), irrespective of the reports of agents $2$ and agent $3$. Specifically, agent $1$ must be allocated either $(i)$ at least two items, $(ii)$ only item $1$, or $(iii)$ only item $2$. In each of these cases, the value of the items allocated to agent $1$ (as per her true valuation) will be strictly more than $2$, which completes the proof. 

We proceed to show that agent $1$ cannot be allocated just item $3$ or just item $4$ in any Nash social welfare maximizing allocation. Towards a contradiction, assume that in a Nash social welfare maximizing allocation $A$, agent $1$ is allocated either only item $3$ or only item $4$. Since agent $1$ is allocated only one item, there must be some other agent who is allocated at least two items. We consider the following exhaustive cases based on the items allocated to this other agent.\\

\noindent
\emph{Case {\rm I}}: There is an agent $j \in \{2,3\}$ having both items $1$ and $2$. In this case, the allocation $A$ is not EF1 --- and hence not maximizing Nash social welfare --- since agent $1$ envies agent $j$ (with respect to the reported valuation $\bids_1$), even upon the removal of one item. \\

\noindent
\emph{Case {\rm II}}: There is an agent $j \in \{2,3\}$ who gets one item from the set of items $\{ 1, 2 \}$ and one item from $\{ 3, 4 \}$. Without loss of generality assume that agent $j$ is allocated items $2$ and $3$, and agent $1$ gets item $4$. We will show that in this case $A$ can never be a Nash social welfare maximizing allocation. 

Since allocation $A$ maximizes Nash social welfare, it must be that transferring item $2$ from agent $j$ to agent $1$ does not increase the Nash social welfare, i.e., the following inequalities must hold:
\begin{align}
    b_{1,4} \cdot (b_{j, 2} + b_{j,3}) & \geq b_{j,2} (b_{1,3} + b_{1,4}) \notag \\
    1 \cdot (b_{j, 2} + b_{j,3}) & \geq b_{j,2} \cdot (1 + 1) \tag{substituting $b_{1,3} = b_{1,4} = 1$}\\
    b_{j,3} & \geq b_{j,2} \label{inequality:MNW-impossibility-1}
\end{align}
Similarly, transferring item $3$ from agent $j$ to agent $1$ must also not increase the Nash social welfare:
\begin{align}
    b_{1,4} \cdot (b_{j, 2} + b_{j,3}) & \geq b_{j,3} \cdot (b_{1,2} + b_{1,4}) \notag \\
    1 \cdot (b_{j, 2} + b_{j,3}) & \geq b_{j,3} \cdot (2 + 1) \tag{substituting $b_{1,2} = 2$ and $b_{1,4} = 1$}\\
    b_{j,2} & \geq 2 b_{j,3} \label{inequality:MNW-impossibility-2}
\end{align}
Combining inequalities (\ref{inequality:MNW-impossibility-1}) and (\ref{inequality:MNW-impossibility-2}) we have $b_{j,3} \geq 2 b_{j,3}$, which can be true only if $b_{j,3} = 0$. However, since $A$ allocates item $3$ to agent $j$, and $b_{j,3} = 0$, but $b_{1,3} > 0$, $A$ is not Nash social welfare maximizing (in fact, not even Pareto efficient).


The same argument can be extended to instances having $n>3$ agents and $m = n+1$ items by considering the case wherein agent $1$'s true valuation, $\values_1 = (3.9, 3, 3, \ldots, 3, 2, 0.9)$. Here, the misreported valuation which leads to an improvement in her (expected) worst-case utility is $\bids_1 = (2, 2, 2, \ldots, 2, 1, 1)$.
\end{proof}

\section{Fair and Efficient Mechanisms}\label{sec:main}
Maximizing the Nash social welfare results in allocations that are fair in addition to being economically efficient ~\cite{caragiannis2019unreasonable}. However, as we established in Theorem~\ref{thm: MNW impossibility}, mechanisms that output MNW allocations are obviously manipulable for any number of agents. 

In this section we state our main result, where we show that there is a deterministic $\NOM$ mechanism that is fair and economically efficient for additive valuations. This result is established by showing a black-box reduction from the problem of designing $\NOM$ mechanisms that output $\EFone$ allocations to the problem of designing algorithms that output $\EFone$ allocations. Additionally, this black-box reduction preserves Pareto efficiency guarantees.

The following theorem formally states our main result.

\begin{theorem}\label{theorem:NOM+PO+EF1}
For additive valuations, there exists a black-box reduction from the problem of designing a not obviously manipulable and EF1 mechanism, to designing an algorithm that computes clean, non-wasteful and EF1 allocations. Additionally, the reduction preserves Pareto efficiency guarantees.
Specifically, given an algorithm $\mathcal{M}^*$ that computes clean, non-wasteful and EF1 allocations, we can construct a deterministic not obviously manipulable mechanism that outputs EF1 partial allocations by using Mechanism \ref{mechanism:NOM+PO+EF1}. Moreover, if $\mathcal{M}^*$ is $\alpha$-Pareto efficient (resp. $\alpha$-fractionally Pareto efficient) then Mechanism~\ref{mechanism:NOM+PO+EF1} always outputs $\alpha$-Pareto efficient (resp. $\alpha$-fractionally Pareto efficient) partial allocations.
\end{theorem}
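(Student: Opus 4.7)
The plan is to establish the theorem in two stages. Stage~1: design Mechanism~\ref{mechanism:NOM+PO+EF1} so that it satisfies a single strong design property, which I will call \emph{reachability}. Stage~2: derive the NOM inequalities, EF1, and $\alpha$-(fractional) Pareto-efficiency preservation from reachability combined with Lemma~\ref{lemma:least-valued-EF1}. The reachability property states that for every agent $i$ and every report $\bids_i$ she might submit, as the reports $\bids_{-i}$ of the remaining agents range over all profiles, the set of allocations returned by the mechanism covers the entire family $\EFone(i,\bids_i)$ of clean, non-wasteful allocations that are EF1 for agent $i$ under $\bids_i$. I would achieve reachability by layering an ``encoding'' branch on top of $\mathcal{M}^*$: whenever the reports $\bids_{-i}$ of the other $n-1$ agents follow a designated format that jointly points to a target allocation $\Alloc$ compatible with $\bids_i$ (for instance, each encoder $j$ reporting strictly positive value exactly on the items in $\Alloc_j$), the mechanism outputs $\Alloc$; in every other case it falls back to $\mathcal{M}^*(\bids)$.

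With reachability, the worst-case inequality~\ref{inequality:NOM-worst-case} is immediate from Lemma~\ref{lemma:least-valued-EF1}: agent $i$'s worst-case $\values_i$-utility under any report $\bids_i$ equals $\min_{A \in \EFone(i,\bids_i)} \values_i(A_i)$, and the lemma gives that this minimum is largest when $\bids_i = \values_i$. For the best-case inequality~\ref{inequality:NOM-best-case}, I would observe that for any clean, non-wasteful allocation $\Alloc$ we have $\values_i(\Alloc_i) \leq \values_i(\items)$, and that this upper bound is attained within $\EFone(i,\values_i)$ by the allocation in which agent $i$ receives every item she values positively (trivially EF1 for her since the others get $\values_i$-value zero; clean and non-wasteful by distributing the remaining items cleanly among the others, or leaving them unallocated). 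Since the upper bound $\values_i(\items)$ does not depend on agent $i$'s report, honest reporting ties or beats every misreport on the best-case side.

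EF1 of every output then holds by construction: the $\mathcal{M}^*$ branch inherits EF1 from $\mathcal{M}^*$, and the encoding branch accepts the target $\Alloc$ only if it is EF1 for every agent under her reported valuation, which is automatic under the suggested encoding because each encoder $j$ reports positive value exactly on $\Alloc_j$ and thus does not envy anyone. For $\alpha$-(f)PO preservation, the $\mathcal{M}^*$ branch inherits the guarantee directly from $\mathcal{M}^*$; in the encoding branch, each encoder $j$ already attains $\bids_j(\items)$ on $\Alloc_j$, so no allocation (integral or fractional) can weakly Pareto-dominate $\Alloc$ without giving each encoder her full reported bundle, leaving no room to strictly improve agent $i$ either.

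The main obstacle is engineering the encoding so that reachability holds \emph{simultaneously} for all $n$ agents without creating inconsistent demands or breaking EF1 for the non-receiver. The natural tension is that different agents' encoding formats can collide on the same input, and that allowing each agent to ``request'' any $\Alloc \in \EFone(i,\bids_i)$ might interact badly with the clean, non-wasteful, EF1 requirements for the remaining agents. I would resolve this by fixing a symmetric protocol in which, from the syntactic form of $\bids$, there is at most one designated receiver $i$ whose $\EFone(i,\bids_i)$-set is being targeted, with the remaining $n-1$ agents cast as encoders; verifying carefully that every $\Alloc \in \EFone(i,\bids_i)$ is encoded by some valid $\bids_{-i}$, and that such an encoding leaves the output EF1 and $\alpha$-(f)PO on the input profile $\bids$, is the most delicate step of the construction.
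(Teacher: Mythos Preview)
Your proposal is correct and takes essentially the same approach as the paper: the paper's Mechanism~\ref{mechanism:NOM+PO+EF1} implements precisely the encoding idea you describe (each ``encoder'' $j$ reports positive value exactly on $A_j$, with a case structure on the pairwise intersections of the supports $D_j$ determining the designated receiver), Lemma~\ref{lemma:every-EF1-realized} is exactly your reachability property, and the NOM, EF1, and $\alpha$-(f)PO conclusions are derived from reachability together with Lemma~\ref{lemma:least-valued-EF1} just as you outline. The ``delicate step'' you flag is resolved in the paper by the four-case branching on the indicators $R_i$ (whether removing $D_i$ leaves the remaining supports pairwise disjoint), plus a value-based tie-break when exactly two supports intersect.
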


By combining Theorem~\ref{theorem:NOM+PO+EF1} with known algorithmic results we can get NOM mechanisms, with the same fairness and efficiency guarantees. Specifically,~\cite{garg2021fair} give a pseudo-polynomial time algorithm that computes fractionally Pareto efficient and $\EFone$ allocations. Fractional Pareto efficiency implies non-wastefulness, and without loss of generality we can assume that this algorithm outputs clean allocations.\footnote{Starting from an fPO and EF1 allocation $A = (A_1, A_2, \ldots, A_n)$, if we clean each bundle $A_i$, then the resultant partial allocation is fPO and EF1 as well.} Hence, by applying Theorem~\ref{theorem:NOM+PO+EF1} we get the following application.

\begin{corollary}\label{application: additive}(via ~\cite{garg2021fair}).
For agents with additive valuations, there exists a fractionally Pareto efficient, $\EFone$, and $\NOM$ mechanism, that runs in pseudo-polynomial time.
\end{corollary}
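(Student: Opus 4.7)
The plan is to obtain the claimed mechanism as a direct instantiation of Theorem~\ref{theorem:NOM+PO+EF1} with the pseudo-polynomial algorithm of~\cite{garg2021fair} plugged in as the black-box algorithm $\mathcal{M}^{*}$. Since Theorem~\ref{theorem:NOM+PO+EF1} requires $\mathcal{M}^{*}$ to output \emph{clean, non-wasteful and EF1} partial allocations and to be $\alpha$-fractionally Pareto efficient (with $\alpha = 1$ in our case), the work reduces to (i) checking that the output of the \cite{garg2021fair} algorithm can be massaged into this form, and (ii) verifying that the resulting composition is still pseudo-polynomial.

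First I would recall the algorithmic guarantee of \cite{garg2021fair}: for additive valuations, they give a pseudo-polynomial time algorithm that outputs a complete integral allocation $A = (A_1, \ldots, A_n)$ that is simultaneously fPO and EF1. Fractional Pareto efficiency immediately implies non-wastefulness: if there were an unallocated item $j$ with $v_{i,j} > 0$ for some agent $i$, then handing $j$ to $i$ would yield a fractional Pareto improvement, a contradiction. Thus the output is already non-wasteful, and what remains is to clean each bundle.

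Next I would argue the cleaning step (the content of the footnote). Given an fPO + EF1 allocation $A$, for each agent $i$ iteratively remove any item $g \in A_i$ with $v_{i,g} = 0$; call the resulting partial allocation $A'$. Under additive valuations, $v_i(A'_i) = v_i(A_i)$ for every $i$, so EF1 is preserved (the envy comparisons $v_i(A'_i) \geq v_i(A'_j \setminus \{g\})$ are, if anything, only made easier, since $A'_j \subseteq A_j$). For fractional Pareto efficiency, note that any hypothetical fractional improvement over $A'$ would also be a fractional improvement over $A$ (since only valueless items were discarded, they contribute zero to every agent's utility), contradicting fPO of $A$. Finally, by construction each $A'_i$ is clean for agent $i$: removing any remaining good $g \in A'_i$ strictly decreases $v_i(A'_i)$ because $v_{i,g} > 0$. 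Cleaning touches each item at most once per agent and thus adds only polynomial overhead.

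Finally I would apply Theorem~\ref{theorem:NOM+PO+EF1} with $\mathcal{M}^{*}$ taken to be the cleaned version of the \cite{garg2021fair} algorithm. The theorem guarantees that the resulting mechanism is deterministic, NOM, outputs EF1 (partial) allocations, and preserves the fractional Pareto efficiency of $\mathcal{M}^{*}$ (the $\alpha = 1$ case), yielding exactly the fPO + EF1 + NOM guarantees claimed. The only remaining point is the running time: Mechanism~\ref{mechanism:NOM+PO+EF1} invokes $\mathcal{M}^{*}$ a number of times that is polynomial in $n$ and $m$ (plus polynomial overhead for the bookkeeping in the reduction), and since $\mathcal{M}^{*}$ is pseudo-polynomial, the composition is pseudo-polynomial as well. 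I do not expect any serious obstacle here; the only minor subtlety is confirming that the cleaning operation, which we perform before handing the algorithm to the reduction, keeps all of the properties Theorem~\ref{theorem:NOM+PO+EF1} demands as input.
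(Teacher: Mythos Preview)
Your proposal is correct and follows essentially the same approach as the paper: instantiate Theorem~\ref{theorem:NOM+PO+EF1} with the pseudo-polynomial fPO+EF1 algorithm of \cite{garg2021fair}, observing that fPO implies non-wastefulness and that cleaning preserves both fPO and EF1. The paper presents this as a one-line application (with the cleaning argument in a footnote), whereas you spell out the details more carefully; the only minor overstatement is that Mechanism~\ref{mechanism:NOM+PO+EF1} actually calls $\mathcal{M}^*$ at most once, not polynomially many times, but this only strengthens your running-time claim.
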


\subsection{The Reduction}
Our reduction, Mechanism~\ref{mechanism:NOM+PO+EF1}, takes as input reported valuations $\bids = (\bids_1, \dots, \bids_n)$, and black-box access to a deterministic algorithm $\mathcal{M}^*$.\footnote{Algorithm $\mathcal{M}^*$ could possibly be computationally inefficient.} 
Our reduction requires the algorithm $\mathcal{M}^*$ to always output non-wasteful, clean EF1 allocations for every possible input valuation functions.

The reduction is based on two key ideas, first, through a careful construction of cases we ensure that if an agent $i\in \agents$ reports valuation $\bids_i$, then the set of allocations that can be returned by our reduction, Mechanism~\ref{mechanism:NOM+PO+EF1}, for every possible $\bids_{-i}$ can be precisely characterized (Lemma \ref{lemma:every-EF1-realized}). Second, we prove a structural result (Lemma \ref{lemma:least-valued-EF1}) concerning this set of possible output allocation. This structural result plays a central role in establishing that Mechanism~\ref{mechanism:NOM+PO+EF1} is not obviously manipulable. Further, we prove that by construction our reduction always outputs fPO + EF1 allocations.

\begin{algorithm}[ht]
\caption{Black-box reduction}
\label{mechanism:NOM+PO+EF1}
\hspace*{\algorithmicindent}\textbf{Input:} Reported valuation functions of agents $\bids$. Black-box access to an algorithm $\mathcal{M}^*$. \\
\hspace*{\algorithmicindent}\textbf{Output:} A partial integral allocation $\Alloc = (A_1, \ldots, A_n)$ 
  \begin{algorithmic}[1]
  		\STATE \textbf{Set} $D_i \gets \{ g \in \items \ | \ b_{i,g} > 0 \}$. 
  		\STATE \textbf{Set} $\widehat{D}_i \gets \items \setminus \cup_{j\neq i} D_j$ for each $i \in \agents$.
		\STATE \textbf{Set} $R_i \gets 1$ if subsets $\{D_j\}^n_{j=1} \setminus \{D_i\}$ are pairwise disjoint; $0$ otherwise.
		\tikzmark{top1}
  		\IF{the subsets $\{D_j\}_{j=1}^n$ are pairwise disjoint 	}
  			\STATE $\Alloc^* \gets (D_1, \ldots, D_n)$
  		\tikzmark{bottom1}
  		\tikzmark{top2}
  		\ELSIF{$\exists i \in \agents$ such that $R_i = 1$ and $R_j = 0$ for every $j \in \agents \setminus \{i\}$} \label{case2-begin}
  			\IF{the allocation $(D_1, D_2, \ldots, \widehat{D}_i, \ldots, D_n)$ is $\EFone$}
  				\STATE $\Alloc^* \gets (D_1, D_2, \ldots, \widehat{D}_i, \ldots, D_n)$ \label{case2-return1}
  			\ELSE
  				\STATE $\Alloc^* \gets \mathcal{M}^*(\bids_1, \bids_2, \ldots, \bids_n)$
  			\ENDIF \label{case2-end}
  		\tikzmark{bottom2}
  		\tikzmark{top3}
  		\ELSIF{$\exists i,j \in \agents$ such that $i<j$ and $R_i = R_j = 1$, and $R_k =0$ for $k \in [n] \setminus \{i,j\}$}~~~~~~\tikzmark{right} \label{case3-begin}
  			\IF{$\bids_i(D_i \cap D_j) < \bids_j(D_i \cap D_j)$} \label{case3-tie-breaking:begin}
				\IF{$(D_1, \ldots, \widehat{D}_i, \ldots, D_j, \ldots, D_n)$ is $\EFone$} \label{case3-subcase1-if} 
					\STATE $\Alloc^* \gets (D_1, \ldots, \widehat{D}_i, \ldots, D_j, \ldots, D_n)$ \label{case3-subcase1-return}
				\ELSE
					\STATE $\Alloc^* \gets \mathcal{M}^*(\bids_1, \bids_2, \ldots, \bids_n)$
				\ENDIF  \label{case3-subcase1:end}		
  			\ELSE \label{case3-subcase2:begin}
				\IF{$(D_1, \ldots, D_i, \ldots, \widehat{D}_j, \ldots, D_n)$ is $\EFone$} 
					\STATE $\Alloc^* \gets (D_1, \ldots, D_i, \ldots, \widehat{D}_j, \ldots, D_n)$
				\ELSE
					\STATE $\Alloc^* \gets \mathcal{M}^*(\bids_1, \bids_2, \ldots, \bids_n)$ 
				\ENDIF 
  			\ENDIF \label{case3-tie-breaking:end}
  			\tikzmark{bottom3}
  		\tikzmark{top4}
  		\ELSE
  			\STATE $\Alloc^* \gets \mathcal{M}^*(\bids_1, \bids_2, \ldots, \bids_n)$
  		\ENDIF
  		\tikzmark{bottom4}
  		\STATE For all $i \in \agents$, iteratively remove goods $g \in \Alloc^*_i$ such that $\bids_i( \Alloc^*_i \setminus \{ g \} ) = \bids_i( \Alloc^*_i )$. \label{code:removal}
  		\STATE \textbf{return} $\Alloc^*$
		\end{algorithmic}
		\AddNote{top1}{bottom1}{right}{\emph{Case $\rm{I}$}}
		\AddNote{top2}{bottom2}{right}{\emph{Case $\rm{II}$}}
		\AddNote{top3}{bottom3}{right}{\emph{Case $\rm{III}$}}
		\AddNote{top4}{bottom4}{right}{\emph{Case $\rm{IV}$}}
\end{algorithm}


We begin by defining notations required to describe our reduction. For each agent $i \in \agents$, let $D_i$ be the set of goods that have strictly positive value for $i$ i.e. $D_i \coloneqq \{ g \in \items \ | \ b_{i,g} > 0 \}$. 
Let $\widehat{D}_i \coloneqq \items \setminus \cup_{j\neq i} D_j$ be the goods remaining after removing all goods desired by agents other than $i$. Finally, let $R_i$, for each agent $i \in \agents$, be the indicator for the event that the subsets $( \{D_j\}_{j \in \agents}) \setminus \{D_i\}$ are pairwise disjoint, i.e. $R_i = 1$ iff $( \{D_j\}_{j \in \agents}) \setminus \{D_i\}$ are pairwise disjoint, and $R_i = 0$ otherwise. Note that the $R_i$s can be easily computed in polynomial time.

Mechanism~\ref{mechanism:NOM+PO+EF1} sequentially considers four cases based on the sets $\{D_j\}^n_{j=1}$ and the values $\{R_j\}^n_{j=1}$, to find a temporary assignment $\Alloc^*$, i.e., a (partial) integral allocation.

\begin{description}
\item[\emph{Case \rm{I}}:] The sets $\{D_j\}^n_{j=1}$ are pairwise disjoint (equivalently, $R_i = 1$ for all $i \in \agents$). In this case, the temporary assignment allocates the bundle $D_j$ to agent $j$ for each agent $j \in \agents$.
\item[\emph{Case \rm{II}}:] $R_i = 1$ for exactly one agent $i \in \agents$. This can occur if $D_i$ intersects with two or more $D_j$s (and these are the only intersections among pair of subsets in $\{D_k\}_{k=1}^n$). In this case, if allocating the bundle $\widehat{D}_i$ to agent $i$, and the bundle $D_j$ to each agent $j \in \agents$, for each $j \neq i$ results in an $\EFone$ allocation, then the temporary assignment is set to this allocation. Otherwise, if this allocation is not $\EFone$, then the temporary assignment is the allocation returned by the black-box algorithm $\mathcal{M}^*$ for the given valuation profile.
\item[\emph{Case \rm{III}}:] There are exactly two agents $i,j \in \agents$ such that $R_i = R_j = 1$. Note that the only way this is possible is if $D_i$, $D_j$ intersect each other and any other pair of subsets $D_k, D_l$ where $\{k,l\} \neq \{i,j\}$, are disjoint. In this case, Mechanism~\ref{mechanism:NOM+PO+EF1} considers whether the set of goods $D_i \cap D_j$ are valued more by agent $i$ or agent $j$; each of these two subcases are similar to \emph{Case \rm{II}}, see Lines \ref{case3-tie-breaking:begin}-\ref{case3-tie-breaking:end}.
\item[\emph{Case \rm{IV}}:] None of the previous cases holds (equivalently, $R_i = 0$ for all $i \in \agents$). In this case, the temporary assignment is simply the allocation returned by the black-box algorithm $\mathcal{M}^*$ for the given valuation profile.
\end{description}

The last step of Mechanism~\ref{mechanism:NOM+PO+EF1} (Line~\ref{code:removal}) is to make the bundle allocated to each agent $i$ in the temporary assignment \emph{clean}, i.e. remove items from her bundle that she does not value. This step is necessary for one of our technical lemmas (specifically, for characterizing the set of all allocations that are possible outputs of our mechanism). Additionally, note that this step does not affect efficiency or envy-freeness up to one item (i.e. if the allocation satisfied any of these notions before this step, it continues to do so).

\textbf{The case of strictly positive valuations.} It is interesting to notice that when valuations are strictly positive, i.e. $v_{i,j} > 0$ for all $i,j$, Mechanism~\ref{mechanism:NOM+PO+EF1} simply outputs the allocation of $\mathcal{M}^*$ (Case \rm{IV}). However, this does \emph{not} imply that every PO and EF1 algorithm is NOM: in calculating worst-case outcomes, agents consider the possibility that others have zero valuations. This fact is critical in our proof of correctness (see the proof of Lemma~\ref{lemma:every-EF1-realized}).

\paragraph{Establishing the main result}

Towards establishing our main result, Theorem \ref{theorem:NOM+PO+EF1}, we begin by proving the following supporting lemmas. We start by showing that Mechanism~\ref{mechanism:NOM+PO+EF1} preserves the efficiency properties of $\mathcal{M}^*$. 
We will then proceed to show that Mechanism~\ref{mechanism:NOM+PO+EF1} also preserves EF1 if $\mathcal{M}^*$ is EF1, and, in fact, Mechanism~\ref{mechanism:NOM+PO+EF1} is also NOM when $\mathcal{M}^*$ is EF1.

\begin{lemma}
\label{lemma:mech is PO}
If $\mathcal{M}^*$ is $\alpha$-Pareto efficient (resp. $\alpha$-fractionally Pareto efficient) then Mechanism~\ref{mechanism:NOM+PO+EF1} always outputs an $\alpha$-Pareto efficient (resp. $\alpha$-fractionally Pareto efficient) partial allocation.
\end{lemma}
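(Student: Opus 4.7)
The proof will proceed by case analysis on the branch of Mechanism~\ref{mechanism:NOM+PO+EF1} taken on input $\bids$. In the ``fallback'' sub-branches---namely, when the candidate allocation in Case~II or Case~III fails the $\EFone$ check, as well as in all of Case~IV---the pre-cleaning output $\Alloc^*$ equals $\mathcal{M}^*(\bids)$, which is $\alpha$-fPO (resp. $\alpha$-PO) by hypothesis. Because $\mathcal{M}^*$ is assumed to return clean allocations, no item satisfies the removal condition $\bids_i(\Alloc^*_i \setminus \{g\}) = \bids_i(\Alloc^*_i)$, so the cleaning step on Line~\ref{code:removal} is a no-op and the $\alpha$-(f)PO guarantee transfers to the final output.

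For the ``non-fallback'' sub-branches of Cases~I, II, and~III, the plan is to show that the pre-cleaning $\Alloc^*$ is in fact \emph{fully} fPO, which is strictly stronger than $\alpha$-fPO for any $\alpha \geq 1$. The unifying observation is that in each such branch, every agent other than at most one ``residual'' agent $i^\star$ (namely, agent $i$ in Case~II, and one of $i$ or $j$ in Case~III, depending on the tie-breaking sub-branch on Line~\ref{case3-tie-breaking:begin}) is assigned the entire bundle $D_k = \{g : b_{k,g} > 0\}$ of items she reports positively. Therefore $\bids_k(\Alloc^*_k) = \bids_k(D_k)$ is the maximum utility $k$ can attain under any feasible allocation. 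Consequently, in any candidate fractional allocation $X'$ that Pareto-dominates $\Alloc^*$, every non-residual agent $k$ must preserve this maximum, which forces $X'_{k,g} = 1$ for every $g \in D_k$. Using the disjointness structure of $\{D_k\}_{k \ne i^\star}$ enforced by the branching conditions on the indicators $R_\cdot$, these constraints collectively confine the items available to $i^\star$ in $X'$ to a subset of $\widehat{D}_{i^\star}$, yielding $\values_{i^\star}(X'_{i^\star}) \le \values_{i^\star}(\widehat{D}_{i^\star}) = \values_{i^\star}(\Alloc^*_{i^\star})$. Hence $X'$ cannot strictly improve upon $\Alloc^*$, establishing fPO.

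Finally, I would argue that the cleaning step on Line~\ref{code:removal} preserves ($\alpha$-)fPO. Every item $g$ removed from $\Alloc^*_i$ satisfies $\bids_i(\Alloc^*_i \setminus \{g\}) = \bids_i(\Alloc^*_i)$ by construction, so agent utilities are unchanged from the pre-cleaning to the post-cleaning allocation. If the post-cleaning partial allocation admitted a fractional dominator $X'$ in the $\alpha$-(f)PO sense, the same $X'$ would also be a feasible fractional dominator of the pre-cleaning allocation---newly unallocated items only enlarge the feasible region for $X'$---contradicting the $\alpha$-(f)PO guarantee established in the previous two paragraphs.

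The main obstacle will be executing the value-counting argument of the second paragraph uniformly across all sub-branches, and in particular for Case~III, where the tie-breaking on Line~\ref{case3-tie-breaking:begin} splits into two symmetric sub-subcases that must both be verified; the subtle point is that the role of ``residual agent'' $i^\star$ depends on this tie-breaking, but in every instance the branching conditions on the $R_\cdot$ indicators imply exactly the disjointness needed to saturate the non-residual agents at their maximum utility, from which fPO falls out.
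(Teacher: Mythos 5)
Your proposal is correct and follows essentially the same route as the paper's proof: a case analysis in which the fallback branches inherit $\alpha$-(f)PO from $\mathcal{M}^*$, the non-fallback allocations are shown to be fully fPO because every non-residual agent receives her entire desired set $D_k$ (hence maximum attainable utility, pinning any would-be dominator and confining the residual agent to $\widehat{D}_{i^\star}$), and the cleaning step preserves efficiency since it leaves all agents' own utilities unchanged. The only difference is presentational: you spell out the saturation argument and the Case~III tie-breaking sub-branches in more detail than the paper does.
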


\begin{proof}
First, note that Line~\ref{code:removal} (making the temporary assignment clean) does not affect the utility of any agent $i$ for her own bundle.
Therefore if the allocation before this step was $\alpha$-PO (respectively $\alpha$-fPO) then after Line~\ref{code:removal} these efficiency guarantees will continue to hold. Considering all possible cases for Mechanism~\ref{mechanism:NOM+PO+EF1}, if all agents $j \in \agents$ receive the bundle $D_j$ (as in case \rm{I}), the allocation is fractionally Pareto efficient. If all agents $j$ receive the bundle $D_j$, with the exception of a single agent $i$ that receives $\widehat{D}_i$, then all agents except $i$ have the maximum possible utility, i.e. their utility for all items $\items$.
Furthermore, it is impossible to improve the utility of $i$ without decreasing the utility of some other agent $j$; this follows from the definition of $\widehat{D}_i$ and $D_j$. Therefore, this allocation is fractionally Pareto efficient as well. This allocation is considered in cases \rm{II} and \rm{III} (if the black-box algorithm $\mathcal{M}^*$ is not called), and it is the output of Mechanism~\ref{mechanism:NOM+PO+EF1} if it is also $\EFone$. In the remaining cases (if the aforementioned allocation is not $\EFone$, or we are in case \rm{IV}) Mechanism~\ref{mechanism:NOM+PO+EF1} returns the allocation computed by the black-box algorithm $\mathcal{M}^*$ (modulo Line~\ref{code:removal}), which is an $\alpha$-PO (respectively $\alpha$-fPO) partial allocation by definition.
\end{proof}

\begin{lemma}
\label{lemma:mech is EF1}
If $\mathcal{M}^*$ outputs non-wasteful, clean, and $\EFone$ allocations then Mechanism~\ref{mechanism:NOM+PO+EF1} always outputs non-wasteful, clean, and $\EFone$ partial allocations.
\end{lemma}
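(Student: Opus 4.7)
I will split the argument into two steps: show that the temporary allocation $\Alloc^*$ produced before Line~\ref{code:removal} is $\EFone$ and non-wasteful, and then verify that the cleaning loop on Line~\ref{code:removal} preserves both properties while establishing cleanliness. Cleanliness is automatic: Line~\ref{code:removal} removes items from each bundle until every remaining item is valued positively by its owner, so the returned allocation is clean by construction. It therefore remains to handle $\EFone$ and non-wastefulness.

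\textbf{The temporary allocation.} For \emph{Case I}, the sets $D_1,\ldots,D_n$ are pairwise disjoint, so for any $i\neq j$ and any $g\in D_j$ we have $g\notin D_i$, i.e.\ $b_{i,g}=0$; hence $\bids_i(D_j)=0$ and agent $i$ does not envy anyone, giving $\EFone$ trivially, while the unallocated items $\items\setminus\bigcup_k D_k$ have zero value for every agent, giving non-wastefulness. In \emph{Cases II} and \emph{III}, the candidate allocation of the form $(D_1,\ldots,\widehat D_i,\ldots,D_n)$ (and its Case-III variant) is returned only after an explicit $\EFone$ check; if the check fails, the mechanism falls back to $\mathcal{M}^*(\bids)$, which is $\EFone$ and non-wasteful by hypothesis. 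When the candidate is returned, $\bigcup_{k\neq i}D_k$ together with $\widehat D_i$ cover all of $\items$, so non-wastefulness is immediate. In \emph{Case IV}, both properties come directly from $\mathcal{M}^*$.

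\textbf{Cleaning preserves the properties.} Additivity implies that Line~\ref{code:removal} removes exactly the items $g\in\Alloc^*_i$ with $b_{i,g}=0$, so $\bids_i(\Alloc_i)=\bids_i(\Alloc^*_i)$ for every $i$. In Case I and in every subcase that returns $\mathcal{M}^*(\bids)$, the bundles are already clean and the loop is vacuous. In the candidate branches of Cases II and III, the only affected bundle is the one equal to $\widehat D_i$ (or $\widehat D_j$); the removed items lie in $\widehat D_i\setminus D_i\subseteq \items\setminus\bigcup_k D_k$, and by the definition of the $D_k$'s have value zero for \emph{every} agent. Hence non-wastefulness is preserved, and moreover $\bids_p(\Alloc_q)=\bids_p(\Alloc^*_q)$ for all $p,q$, so every $\EFone$ witness from the temporary allocation continues to certify $\EFone$ for the cleaned allocation.

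\textbf{Main obstacle.} The only delicate point is verifying that cleaning preserves $\EFone$: one must rule out the possibility that an item needed as a witness is removed in a way that changes some other agent's valuation of the bundle. This is handled by the structural observation that items cleaned away in Cases II and III lie outside every $D_k$ and therefore carry zero value across the entire population; the remaining cases clean nothing. This is a minor check, and the lemma follows.
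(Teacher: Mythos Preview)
Your proof is correct and follows essentially the same case-by-case structure as the paper's own argument. The only minor stylistic difference is in handling Line~\ref{code:removal}: the paper argues generically that removing items can only decrease an agent's value for another's bundle (so $\EFone$ is preserved), whereas you use the stronger mechanism-specific observation that every removed item lies in $\items\setminus\bigcup_k D_k$ and hence has zero value for all agents; both routes are valid.
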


\begin{proof}
Line~\ref{code:removal} does not affect the utility of any agent for her own bundle, and since valuations are additive, removal of items could only decrease her utility for the bundle of a different agent. Therefore if the allocation before this step was $\EFone$, it remains $\EFone$ afterwards. This step also ensures that the allocation is clean. Next, considering all possible cases, we first have that if every agent $j \in \agents$ receives the bundle $D_j$ (as in case \rm{I}), the allocation is envy-free, and therefore $\EFone$. In cases \rm{II} and \rm{III}, Mechanism~\ref{mechanism:NOM+PO+EF1} either checks whether an allocation is $\EFone$ before outputting it, or calls $\mathcal{M}^*$. In case \rm{IV} $\mathcal{M}^*$ is called. Whenever $\mathcal{M}^*$ is called, then the output allocation is $\EFone$ by definition. 
Finally, note that Mechanism~\ref{mechanism:NOM+PO+EF1} returns non-wasteful allocations: it is easy to confirm that whenever Mechanism~\ref{mechanism:NOM+PO+EF1} doesn't call $\mathcal{M}^*$ its allocation is non-wasteful, while if $\mathcal{M}^*$ is called, the allocation of $\mathcal{M}^*$ the returned allocation is non-wasteful and clean, so it remains non-wasteful after Line~\ref{code:removal}.\footnote{Note that if the allocation of $\mathcal{M}^*$ was non-wasteful and not clean, then Line~\ref{code:removal} could have altered this allocation into a wasteful one.}

\end{proof}

The next lemma characterizes the set of allocations that Mechanism~\ref{mechanism:NOM+PO+EF1} could possibly return given the reported valuation of a particular agent. 
Specifically, we will show that every clean, non-wasteful and $\EFone$ allocation, that is consistent with the reported valuation, is a possible output of Mechanism \ref{mechanism:NOM+PO+EF1}. Before stating the lemma, we define some useful notation.
For any agent $i \in \agents$ and valuation function $\values_i$, let $\EFone(i,\values_i)$ be the set of (partial) allocations $\Alloc = (A_1, \ldots, A_n)$ that are clean, non-wasteful and envy-free up to one item with respect to agent $i$ when her valuation function is $\values_i$, i.e., 
$\EFone(i,\values_i) = \{ \Alloc \in \Pi_n(\items) \ | \ \forall g \in A_i ~~ v_{i,g} > 0, \text{ and } \forall j \in \agents \text{ with } A_j \neq \emptyset, \values_i(A_i) \geq \values_i(A_j \setminus \{g\}) \text{ for some }g \in A_j, \text{ and }  \values_i( \items \setminus \cup_{k \in \agents} A_k ) = 0  \}$.

\begin{lemma}
\label{lemma:every-EF1-realized}
Given any agent $i \in \agents$ and valuation function $\values_i$, for every allocation $\Alloc \in \EFone(i,\values_i)$ there exists a set of valuations $\values_{-i}$ such that Mechanism~\ref{mechanism:NOM+PO+EF1} on input  $(\values_i, \values_{-i})$ outputs allocation $\Alloc$.
\end{lemma}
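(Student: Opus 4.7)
The plan is to construct, for any $\Alloc = (A_1, \ldots, A_n) \in \EFone(i, \values_i)$, an explicit report profile $\values_{-i}$ under which Mechanism~\ref{mechanism:NOM+PO+EF1} outputs exactly $\Alloc$. Fix a constant $M > \values_i(\items)$ and define $v_{j,g} = M$ when $j \neq i$ and $g \in A_j$, and $v_{j,g} = 0$ otherwise. Then $D_j = A_j$ for every $j \neq i$, so the sets $\{D_j\}_{j \neq i}$ are pairwise disjoint and $R_i = 1$. Cleanness of $\Alloc$ for $i$ gives $A_i \subseteq D_i$, while non-wastefulness for $i$ gives $D_i \subseteq \cup_k A_k$. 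The rest of the argument is a case split on the ``conflict set'' $T \coloneqq \{j \in \agents \setminus \{i\} : A_j \cap D_i \neq \emptyset\}$: I will check that $|T| = 0$, $|T| \geq 2$, and $|T| = 1$ drop the mechanism into Cases~I, II, and III respectively, and that in each case the returned allocation equals $\Alloc$ after the cleaning in Line~\ref{code:removal}.

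When $|T| = 0$, $D_i$ is disjoint from every other $D_j = A_j$, and together with $D_i \subseteq \cup_k A_k$ this forces $D_i = A_i$; all $D_k$'s are pairwise disjoint, so Case~I returns $(D_1, \ldots, D_n) = \Alloc$ and cleaning is a no-op. When $|T| \geq 2$, only $R_i = 1$, so the mechanism enters Case~II with candidate $(A_1, \ldots, \widehat{D}_i, \ldots, A_n)$, where $\widehat{D}_i = A_i \cup U$ and $U \coloneqq \items \setminus \cup_k A_k$. Since $\values_i(U) = 0$, we have $\values_i(\widehat{D}_i) = \values_i(A_i)$, so the EF1 check for $i$ reduces exactly to the EF1 condition of $\Alloc$; each $j \neq i$ values only her own $A_j$ under $\bids_j$ and therefore envies no one. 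The mechanism outputs this candidate, and cleaning strips exactly the items in $U$ (which have $v_{i,g} = 0$) from $i$'s bundle, yielding $\Alloc$. When $T = \{k\}$, exactly $R_i = R_k = 1$ and the mechanism enters Case~III with distinguished pair $\{i, k\}$; the candidate allocation is again $(A_1, \ldots, \widehat{D}_i, \ldots, A_n)$ and the verification mirrors Case~II.

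The most delicate step is the tie-breaking in Case~III, because we must steer the mechanism into the subcase that replaces $A_i$ with the enlarged $\widehat{D}_i$ (rather than $A_k$ with $\widehat{D}_k$, which would not match $\Alloc$). The inequality
\[
\bids_k(D_i \cap D_k) \;=\; M \,|D_i \cap D_k| \;\geq\; M \;>\; \values_i(\items) \;\geq\; \values_i(D_i \cap D_k)
\]
handles both orderings of the pair on Line~\ref{case3-tie-breaking:begin}: if $i < k$ the strict ``$<$'' branch triggers, whereas if $k < i$ the ``else'' branch triggers, and in both cases the bundle that is replaced by $\widehat{D}$ is agent $i$'s. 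The candidate-is-EF1 calculation from Case~II then shows the mechanism does not fall back to $\mathcal{M}^*$, so cleaning again returns $\Alloc$. A uniform large $M$ thus simultaneously pins down both the case the mechanism enters and the subcase within Case~III, which is the crux of the construction.
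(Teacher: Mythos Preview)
Your proof is correct and follows essentially the same approach as the paper: both arguments set $D_j = A_j$ for every $j \neq i$, split into three cases according to how many of the $A_j$'s the set $D_i$ intersects, and then verify that the corresponding branch of Mechanism~\ref{mechanism:NOM+PO+EF1} outputs $\Alloc$ after cleaning. The only difference is cosmetic: the paper uses unit values for most $j$ and a tailored value $2\values_i(D_i\cap D_{j^*})$ for the single intersecting agent in Case~III, whereas you use a single uniform constant $M>\values_i(\items)$ throughout, which handles the tie-breaking for both orderings of $i$ and $k$ in one shot and is arguably cleaner.
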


\begin{proof}
Given the valuation vector $\values_i$ and an allocation $\Alloc = (A_1, A_2, \dots, A_n) \in \EFone(i,\values_i)$ we will construct valuations for the other agents, $\values_{-i}$, such that Mechanism~\ref{mechanism:NOM+PO+EF1} outputs $\Alloc$ on input  $(\values_i, \values_{-i})$.  The valuation of each agent $j \in (\agents \setminus \{i\})$ will be such that $v_{j,g} = 0$ for all $g \notin A_j$, and $v_{j,g} > 0$ for each good $g \in A_j$ (the precise value of $v_{j,g}$ will depend on how $D_i$ intersects with $A_j$). That is, for each agent $j \in (\agents \setminus \{i\})$, $D_j = A_j$. Therefore, by construction, the subsets $\{D_j\}^n_{j=1} \setminus \{D_i\}$ are pairwise disjoint, and therefore $R_i = 1$ (so, we are never in case $\rm{IV}$ of Mechanism~\ref{mechanism:NOM+PO+EF1}). Also, by construction, Line~\ref{code:removal} will not affect any bundle.
In the rest of the proof, we consider the three exhaustive cases, based on how many sets from $\Alloc_{-i} = (A_1, \dots, A_{i-1}, A_{i+1}, \dots, A_n)$ 
the set $D_i$ intersects.

The first case is when $D_i$ does not intersect any bundle from $\Alloc_{-i}$. Since $v_{i,g} = 0$ for every unallocated good $g$ (by the definition of $\EFone(i,\values_i)$), $D_i \subseteq A_i$, and therefore $R_j = 1$ for all $j \in \agents$. In this case we can set $v_{j,g}$ to be an arbitrary value for all $g \in A_j$, e.g. $v_{j,g} = 1$, for all agents $j \neq i$.
Therefore, Mechanism~\ref{mechanism:NOM+PO+EF1} considers $\emph{Case \rm{I}}$, and it allocates the set $D_j$ to every agent $j$. By construction, for all $j \neq i$, $D_j = A_j$. Additionally, for agent $i$, it must be that the bundle $A_i = D_i$, since $A_i$ is also clean (by Line~\ref{code:removal}). Therefore, the output of Mechanism~\ref{mechanism:NOM+PO+EF1} is precisely $\Alloc$. 

The second case is that $D_i$ intersects exactly one bundle from $\Alloc_{-i}$. Let $j^* \in (\agents \setminus \{i\})$ be the (unique) agent such that $A_{j^*} \cap D_i \neq \emptyset$. Additionally, assume that $i<j^*$; an almost identical argument works if $i > j^*$. 
For each agent $k \in \agents \setminus \{i,j^*\}$, we define $v_{k,g} = 1$ if good $g \in A_k$ and $v_{k,g} = 0$ otherwise. The valuations of agent $j^*$ are as follows
\[ 
v_{j^*,g} = \begin{cases} 
      2  \values_i (D_i \cap D_{j^*}) & g \in A_{j^*} \cap D_i \\
      1 & g \in A_{j^*} \setminus D_i \\
      0 & otherwise 
   \end{cases}
\]
Note that, the set of desired goods $D_k = A_k$ for all agents $k \in (\agents \setminus \{i\})$, and $\widehat{D}_i \supseteq A_i$. Furthermore, since $\Alloc$ is non-wasteful (by the definition of $\EFone(i,\values_i)$), goods in $\widehat{D}_i \setminus A_i$ must have zero value for $i$. Therefore, $\values_i(\widehat{D}_i) = \values_i(A_i)$. The construction satisfies that the bundles $\{D_k\}_{k \in \agents \setminus \{i\}}$ are pairwise disjoint, and the bundle $D_i$ only intersects $D_{j^*} = A_{j^*}$. Therefore, $R_i = R_{j^*} = 1$ and $R_k = 0$ for every other agent $k$. Therefore, Mechanism~\ref{mechanism:NOM+PO+EF1}, given valuations $(\values_i, \values_{-i})$, considers $\emph{Case \rm{III}}$ (Lines \ref{case3-begin}-\ref{case3-tie-breaking:end}) to compute the final allocation. In $\emph{Case \rm{III}}$, the mechanism first checks whether $\values_i (D_i \cap D_{j^*}) < \values_j (D_i \cap D_{j^*})$ holds, which is true for our construction since $v_{j^*,g} = 2 \values_i (D_i \cap D_{j^*})$ for all $g$ in $A_{j^*} \cap D_i = D_{j^*} \cap D_i$. Afterwards, the mechanism checks if the allocation $(D_1, \ldots, \widehat{D}_i, \ldots, D_{j^*}, \ldots, D_n)$ is $\EFone$ (Line \ref{case3-subcase1-if}). Since $\values_i(\widehat{D}_i) = \values_i(A_i)$, and $\Alloc \in \EFone(i,\values_i)$, this allocation is indeed $\EFone$ for agent $i$, and since all other agents are envy-free, Mechanism~\ref{mechanism:NOM+PO+EF1} sets it as the temporary assignment. After removing all zero valued items from $\widehat{D}_i$ (Line~\ref{code:removal}) we are left with the bundle $A_i$, hence, the final allocation is exactly $\Alloc$. 

The third and final case is that $D_i$ intersects more than one bundle from $\Alloc_{-i}$. Here, the valuation vector of every agent $j \in \agents \setminus \{i\}$ is such that $v_{j,g} = 1$ for $g \in A_j$ and $v_{j,g} = 0$ otherwise. By construction, the set of desired goods $D_j = A_j$ for every agent $j \in \agents \setminus \{i\}$, $\widehat{D}_i \supseteq A_i$, and since $\Alloc$ is non-wasteful, we have $\values_i(\widehat{D}_i) = \values_i(A_i)$. $D_i$ intersecting more than one bundle from $\Alloc_{-i}$, is equivalent to $D_i$ intersecting more than one subsets from $\{D_j\}_{j \neq i}$, and therefore $R_j = 0$ for all $j \in \agents \setminus \{i\}$. Thus, given the valuation profile $(\values_i, \values_{-i})$, Mechanism~\ref{mechanism:NOM+PO+EF1} considers $\emph{Case \rm{II}}$ (Lines \ref{case2-begin}-\ref{case2-end}), where it first checks whether the allocation $(D_1, D_2, \ldots, \widehat{D}_i, \ldots, D_n)$ is $\EFone$. This allocation has the same value as $\Alloc$ for agent $i$ (and all other agents have exactly the same allocation), so agent $i$ has an envy of at most one item by definition, and all other agents have no envy at all. Therefore, Mechanism~\ref{mechanism:NOM+PO+EF1} sets this as the temporary assignment. After removing all zero valued items from $\widehat{D}_i$ (Line~\ref{code:removal}) the final allocation is exactly $\Alloc$. This concludes the proof of Lemma~\ref{lemma:every-EF1-realized}.
\end{proof}

Lemma~\ref{lemma:every-EF1-realized} establishes that all partial allocations in $\EFone(i,\values_i)$ can possibly be returned by the mechanism if agent $i$ reports $\values_i$. Therefore, the problem of whether Mechanism~\ref{mechanism:NOM+PO+EF1} is obviously manipulable (partially) reduces to whether some set $\EFone(i,\values'_i)$ has a better worst-case outcome than $\EFone(i,\values_i)$, with respect to the valuation vector $\values_i$. Before describing the subsequent lemma which develops this idea, we define some relevant notations. Given any set of (partial) allocations $S \subseteq \Pi_n(\items)$, an agent $i \in \agents$ and a valuation vector $\values$ let $\ell(i,\values, S) = \min_{\Alloc \in S} \ \values(A_i)$.

\begin{lemma}
\label{lemma:least-valued-EF1}
For any agent $i \in \agents$ and any pair of valuations $\values_i,\values_i'$, it holds that $\ell(i,\values_i, \EFone(i,\values_i)) \geq \ell(i,\values_i, \EFone(i,\values_i'))$.
\end{lemma}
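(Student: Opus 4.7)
The plan is to prove the equivalent statement: for every $X \in \EFone(i,\values)$ there exists $Y \in \EFone(i,\values')$ with $\values(Y_i) \leq \values(X_i)$. Taking the minimum over $X$ on both sides then immediately yields the lemma.

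Given an arbitrary $X \in \EFone(i,\values)$, I would construct such a $Y$ iteratively. Starting from $Y_i = \emptyset$, add items from $\supp(\values')$ to $Y_i$ in increasing order of their $\values$-value, breaking ties in decreasing order of $\values'$-value. At each step, the remaining items of $\supp(\values')$ are tentatively distributed among the non-$i$ agents in a $\values'$-EF1 fashion (for instance, by running round-robin under $\values'$ among them), while items outside $\supp(\values')$ can simply be left unallocated since their $\values'$-value is zero, preserving non-wastefulness for $\values'$. The process stops as soon as the resulting $Y$ satisfies EF1 for $\values'$ from $i$'s perspective, so that by construction $Y$ is clean, non-wasteful, and EF1 for $\values'$, i.e.\ $Y \in \EFone(i,\values')$.

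The main obstacle is to show that the $\values(Y_i)$ produced by this greedy process is at most $\values(X_i)$. The key technical lever will be the EF1-for-$\values$ property of $X$: the inequality $\values(X_j) - \max_{g \in X_j} \values(g) \leq \values(X_i)$ for every $j \in \agents$ bounds the $\values$-mass that any non-$i$ bundle can carry beyond its largest element, and hence bounds the amount of $\values$-cost one may need to absorb into $Y_i$ in order to meet EF1 for $\values'$. I expect the cleanest way to finish to be an exchange/charging argument: for each item $g$ that the greedy process adds to $Y_i$, one exhibits a ``witness'' item of at least the same $\values$-value lying in some $X_j$, already accounted for in $\values(X_i)$ via the EF1-for-$\values$ inequality. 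A convenient way to set up this accounting is to sort the items of $\supp(\values')$ and the items of $X$ jointly by $\values$-value and argue that the greedy additions to $Y_i$ are position-by-position dominated by elements of $X_i$---an argument mirroring the ranking-based analysis used in the proof of Theorem~\ref{thm:rr is nom}.
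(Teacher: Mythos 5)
Your opening reduction---``for every $X \in \EFone(i,\values)$ there is a $Y \in \EFone(i,\values')$ with $\values(Y_i) \leq \values(X_i)$''---is correct and is the same reduction the paper uses. The gap is in the construction of $Y$: your greedy process never consults $X$, and the inequality $\values(Y_i)\leq\values(X_i)$, which you rightly flag as the main obstacle, is in fact \emph{false} for the $Y$ it produces. Take $n=2$ and four items $a_1,a_2,b_1,b_2$ with $\values(a_1)=\values(a_2)=1$, $\values(b_1)=\values(b_2)=10$ and $\values'(a_1)=\values'(a_2)=1$, $\values'(b_1)=\values'(b_2)=100$. The allocation $X=(\{b_1\},\{a_1,a_2,b_2\})$ is clean, complete and $\EFone$ under $\values$ (since $\values(\{a_1,a_2,b_2\}\setminus\{b_2\})=2\leq 10$), so $\ell(i,\values,\EFone(i,\values))\leq 10$ (in fact it equals $10$). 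Your greedy sorts $\supp(\values')$ by increasing $\values$-value, so it must consider $a_1$, then $a_2$, before any $b$; the prefixes $\emptyset$, $\{a_1\}$, $\{a_1,a_2\}$ all fail $\EFone$ under $\values'$ (the complementary bundle minus its most $\values'$-valuable item is worth $102$, $101$, $100$, respectively), so the first prefix that succeeds is $\{a_1,a_2,b_1\}$, of $\values$-value $12>10$. The correct witness is $Y_i=\{b_1\}$, which is $\EFone$ under $\values'$ and has $\values$-value $10$; no tie-breaking rule rescues the greedy here, since the only tied items are identical.

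The same instance shows why the sketched charging argument cannot be completed: $\EFone$-for-$\values$ only certifies $\values(X_j)-\max_{g\in X_j}\values(g)\leq\values(X_i)$ for each other agent $j$, which accounts for $2$ units of $\values$-mass above, while the greedy accumulates $12$; there is no way to match each greedily added item to a witness of at least its $\values$-value that is ``already paid for'' by $\values(X_i)$. The paper's proof sidesteps this entirely by building $Y$ \emph{out of} $X$: it picks a non-$i$ agent $j^*$ and an item $g^*\in X_{j^*}$ (both chosen using $\values'$), sets $Y_i = X_{j^*}\setminus\{g^*\}$ and $Y_{j^*}=X_i\cup\{g^*\}$, and leaves the rest untouched. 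Then the comparison $\values(Y_i)\leq\values(X_i)$ is made against another agent's bundle with one item removed---exactly the quantity that $\EFone$-for-$\values$ controls---while the $\values'$-guided choices of $j^*$ and $g^*$ are what place $Y$ in $\EFone(i,\values')$. Some coupling of $Y$ to the bundles of $X$ of this kind appears unavoidable: the minimizer of $\values(\cdot)$ over $\EFone(i,\values)$ is determined by which large item can be ``hidden'' in another agent's bundle, and an $X$-oblivious prefix greedy never sees that information.
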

\begin{proof}
Let $\Alloc = (A_1, A_2, \ldots, A_n) \in \EFone(i,\values_i)$ be an allocation such $\values_i(A_i) = \ell(i,\values_i, \EFone(i,\values_i))$, i.e. $\Alloc$ is the worst-case outcome for agent $i$ when her valuation vector is $\values_i$. If $\Alloc \in \EFone(i,\values_i')$, the lemma immediately follows, since $min_{\Alloc' \in \EFone(i,\values_i')} \values_i(A'_i)$ will certainly be at most $\values_i(A_i)$. Therefore, assume without loss of generality that $\Alloc \notin \EFone(i,\values_i')$. We will construct an allocation $\Alloc'$ with the following properties: $(i)$ $\values_i(A'_i) \leq \values_i(A_i)$, and $(ii)$ $\Alloc' \in \EFone(i,\values_i')$. These two properties together imply the desired inequality, since
$\ell(i,\values_i,\EFone(i,\values_i)) = \values_i(A_i) \geq \values_i(A'_i) \geq \ell(i,\values_i,\EFone(i,\values_i'))$, where the first inequality follows from $(i)$ and the second inequality follows from $(ii)$.

Let $S^* \subseteq \agents \setminus \{i\}$ be the subset of agents with non-empty bundles in $\Alloc$, i.e., $S^* \coloneqq \{j \in \agents \setminus \{i\} \ | \ A_j \neq \emptyset \}$. Define agent $j^* \coloneqq \argmax_{k \in S^*} \min_{g \in A_k} \values_i'(A_k \setminus \{g\})$ to be the agent in $S^*$ whose allocation ``up to one item'' has maximum value with respect to $\values_i'$, and let 
$g^* \coloneqq \argmax_{g \in A_{j^*}} \values_i(\{g\})$ be the favorite good with respect to $\values_i$  that $j^*$ has. Given these definitions, $\Alloc'$ is defined as follows. $A'_i \coloneqq A_{j^*} \setminus \{g^*\}$, $A'_{j^*} \coloneqq A_i \cup \{g^*\}$, and for all agents $k \in \agents \setminus \{i,j^*\}$ we have $A'_k \coloneqq A_k$. It remains to show that $\Alloc'$ satisfies $(i)$ and $(ii)$.

Towards proving $(i)$ we have $\values_i(A'_i) = \values_i(A_{j^*} \setminus \{g^*\}) \leq \values_i(A_i)$, where the last inequality follows from the facts that $\Alloc \in \EFone(i,\values_i)$ and that $g^*$ is the item with maximum value for $\values_i$ in $A_{j^*}$. 

Towards proving $(ii)$, note that $\agents \setminus \{i\}$ can be written as the union of three disjoint sets $S_1$, $S_2$ and $S_3$, where $S_1 = \agents \setminus (S^* \cup \{i\})$, $S_2 = (S^* \setminus \{j^*\})$, and  $S_3 = \{j^*\}$. We will show that in allocation $\Alloc'$ agent $i$ with valuation vector $\values_i'$, does not envy, up to one item, any of the agents in $S_1$, $S_2$, or $S_3$; this establishes that $\Alloc' \in \EFone(i,\values_i')$.
First, for every agent $k \in S_1$ we have that $A'_k = A_k = \emptyset$, by the definition of $S^*$, so agent $i$ cannot envy any such agent.
Second, given the definition of agent $j^*$, we know that for each agent $k$ in $S_2 = S^* \setminus \{j^*\}$, we have
\begin{align*}
\min_{g \in A'_k} \values_i'(A'_k \setminus \{g\}) & = \min_{g \in A_k} \values_i'(A_k \setminus \{g\}) \tag{$A'_k = A_k$ for all agents except $i$ and $j^*$}\\
& \leq \min_{g \in A_{j^*}} \values_i'(A_{j^*} \setminus \{g\}) \tag{by the definition of $j^*$} \\
& \leq \values_i'(A_{j^*} \setminus \{g^*\}) \tag{$g^* \in A_{j^*}$}\\
& = \values_i'(A'_i). \tag{$A'_i = A_{j^*} \setminus \{g^*\}$}
\end{align*}

Therefore, agent $i$ with valuation $\values_i'$ does not envy up to one item (and, in fact, up to any item) any agent in $S_2$.

Third, recall that $\Alloc \notin \EFone(i,\values_i')$. Moreover, by definition, agent $j^*$ has the bundle that $\values_i'$ likes the most ``up to one item''. Thus, agent $i$ envies $j^*$ even after the removal of \emph{any} one item, and, in particular, $\values'_i(A_i) < \values'_i(A_{j^*} \setminus \{g^*\})$. Using this, we have
\begin{align*}
    \values_i'(A'_i) & = \values_i'(A_{j^*} \setminus \{g^*\}) \tag{$A'_i = A_{j^*} \setminus \{g^*\}$} \\
    & > \values_i'(A_i) \tag{$\Alloc \notin \EFone(i,\values_i')$} \\
    & = \values_i'(A_i \cup \{g^*\} \setminus \{g^*\}) \\
    & = \values_i'(A'_{j^*} \setminus \{g^*\}) \tag{$A'_{j^*} = A_i \cup \{g^*\}$}.
\end{align*}
That is, agent $i$ with valuation $\values_i'$ does not envy, up to one item, any agent in $S_3 = \{ j^* \}$. 

This establishes that $\Alloc' \in \EFone(i,\values_i')$ and concludes the proof.
\end{proof}



Given the previous lemmas, we can show our main theorem with respect to not obviously manipulability.

\begin{theorem}\label{thm: EF1 implies NOM}
If agents' valuation functions are additive and $\mathcal{M}^*$ outputs $\EFone$, non-wasteful and clean allocations, then Mechanism~\ref{mechanism:NOM+PO+EF1} is not obviously manipulable.
\end{theorem}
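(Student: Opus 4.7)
The plan is to verify the two inequalities of Definition~\ref{dfn: nom} in a largely mechanical fashion, leveraging the preceding lemmas. The key identity I want to establish first is: for every agent $i$ and every report $\bids_i$,
\[
\min_{\values_{-i}} \values_i\bigl(\mathcal{M}_i(\bids_i, \values_{-i})\bigr) \;=\; \ell(i, \values_i, \EFone(i, \bids_i)),
\]
where $\mathcal{M}$ denotes Mechanism~\ref{mechanism:NOM+PO+EF1}. The ``$\geq$'' direction is immediate from Lemma~\ref{lemma:mech is EF1}: every output of $\mathcal{M}$ is clean, non-wasteful and $\EFone$, which in particular places it in $\EFone(i, \bids_i)$ (the ``all pairs'' versions trivially imply the single-agent analogues from $i$'s perspective). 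The ``$\leq$'' direction is exactly the content of Lemma~\ref{lemma:every-EF1-realized}: every allocation in $\EFone(i, \bids_i)$ is achieved as $\mathcal{M}(\bids_i, \values_{-i})$ for some $\values_{-i}$, so in particular the worst one for $i$ is realized.

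Given this identity, the worst-case guarantee~\ref{inequality:NOM-worst-case} reduces instantly to
\[
\ell(i, \values_i, \EFone(i, \values_i)) \;\geq\; \ell(i, \values_i, \EFone(i, \bids_i)),
\]
which is precisely Lemma~\ref{lemma:least-valued-EF1}. For the best-case guarantee~\ref{inequality:NOM-best-case}, I plan a short direct argument: consider the allocation that gives agent $i$ the bundle $\{g \in \items : v_{i,g} > 0\}$ and every other agent the empty bundle. It is clean and non-wasteful by construction, and the EF1 condition for $i$ is vacuously satisfied since every other bundle is empty, so it lies in $\EFone(i, \values_i)$. By Lemma~\ref{lemma:every-EF1-realized}, $\mathcal{M}$ outputs it under truthful reporting for a suitable $\values_{-i}$, yielding utility $\values_i(\items)$ — the information-theoretic maximum — which no misreport can exceed, settling the best-case inequality.

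The conceptual obstacle for this theorem has already been overcome: Lemma~\ref{lemma:least-valued-EF1} is the structural heart, and Lemma~\ref{lemma:every-EF1-realized} is what makes the mechanism's worst/best case coincide with those of the set $\EFone(i, \bids_i)$. What remains here is essentially assembly. The only small subtlety worth noting is that $\EFone(i, \bids_i)$ is a finite subset of $\Pi_n(\items)$, so all $\min$ and $\max$ above are attained and the identity is unambiguous.
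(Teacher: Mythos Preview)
Your proposal is correct and follows essentially the same approach as the paper: both establish that the set of possible outputs given report $\bids_i$ is exactly $\EFone(i,\bids_i)$ via Lemmas~\ref{lemma:mech is EF1} and~\ref{lemma:every-EF1-realized}, then invoke Lemma~\ref{lemma:least-valued-EF1} for the worst-case inequality. For the best-case, the paper argues directly via \emph{Case~\rm{I}} of the mechanism (all other agents report zero, so $D_j=\emptyset$ for $j\neq i$), whereas you route through Lemma~\ref{lemma:every-EF1-realized} applied to the allocation $(D_i,\emptyset,\dots,\emptyset)\in\EFone(i,\values_i)$; these are equivalent, since the proof of Lemma~\ref{lemma:every-EF1-realized} in that degenerate case is precisely the Case~\rm{I} argument.
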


\begin{proof}
We first show that agents cannot improve their worst-case utilities by misreporting.

Let $i \in \agents$ be an agent whose true valuation is $\values_i$ and let her misreport be $\bids_i$. By Lemma~\ref{lemma:every-EF1-realized}, for every reported valuation vector $\bids$ of agent $i$, 
and for every allocation $\Alloc \in \EFone(i,\bids)$, there exists a valuation profile $\values_{-i}$ such that $\Alloc$ is the output of Mechanism~\ref{mechanism:NOM+PO+EF1}.
On the other hand, by Lemma~\ref{lemma:mech is EF1}, Mechanism~\ref{mechanism:NOM+PO+EF1} always outputs allocations $\Alloc$ that are envy free up to one item, clean, and non-wasteful. 
Therefore, every output allocation of Mechanism~\ref{mechanism:NOM+PO+EF1} is in $\EFone(i,\bids)$.
Therefore, if agent $i$ reports $\values_i$ (respectively $\bids_i$), then the set of all possible allocations of Mechanism~\ref{mechanism:NOM+PO+EF1} is exactly $\EFone(i,\values_i)$ (respectively $\EFone(i,\bids_i)$).
Overloading notation, let $A^*_i(\values_i, \values_{-i})$ be the allocation of agent $i$ in Mechanism~\ref{mechanism:NOM+PO+EF1}, on input $(\values_i, \values_{-i})$.
For the worst-case utility of agent $i$ we have
\begin{align*}
    \min\limits_{\values_{-i}} \values_i( A^*_i(\values_i, \values_{-i})  )& = \min\limits_{ \Alloc \in \EFone(i,\values_i)} \values_i(A_i)\\
    & = \ell(i, \values_i, \EFone(i,\values_i)) \tag{by definition} \\
    & \geq \ell(i, \values_i, \EFone(i, \bids_i)) \tag{Lemma \ref{lemma:least-valued-EF1}}\\
    & = \min\limits_{\values_{-i}} \values_i( A^*_i(\bids_i, \values_{-i})  ).
\end{align*}

This establishes Inequality~\ref{inequality:NOM-worst-case} from the definition of NOM. To conclude the proof, it remains to show that the best-case utility of an agent is also not improved by misreporting. Note that the best case for agent $i \in \agents$ occurs when all other agents have no value for the items, i.e. $D_j = \emptyset$ for all $j \in \agents \setminus \{i\}$. Hence the sets $\{D_k\}_{k \in [n]}$ are pairwise-disjoint, and Mechanism~\ref{mechanism:NOM+PO+EF1} (via \emph{Case \rm{I}}) allocates the entirety of $D_i$ to agent $i$, which is impossible to improve upon.
\end{proof}

The proof of the above theorem along with the supporting lemmas establish Theorem \ref{theorem:NOM+PO+EF1}.




%
%

\subsection{Ex-ante EF, Ex-post fPO, Ex-post EF1 and NOM}

In the previous section we established (Application~\ref{application: additive}) the existence of a deterministic NOM mechanism that outputs ex-post fPO and EF1 allocations for additive agents. Here we make a short remark that it is not possible to improve upon this result by adding ex-ante fairness guarantees. Specifically, there does not exist a randomized mechanism that is NOM in expectation and outputs ex-post fPO and EF1 allocations, that is additionally ex-ante envy-free. This follows directly from the following impossibility result of~\cite{freeman2020best}.

\begin{theorem}[\cite{freeman2020best}]
There exists instances with additive valuation where there is no randomized allocation that is simultaneously ex-post fPO and $\EFone$, and ex-ante envy-free.
\end{theorem}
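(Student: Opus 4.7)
\medskip
\noindent\textbf{Proof plan.} The plan is to exhibit a single fair-division instance with additive valuations on which no randomized allocation simultaneously satisfies all three properties, and then to verify the impossibility by a case analysis. I would work with a small instance (three agents and a handful of items) so that the set of candidate allocations is finite and small enough to enumerate. The high-level strategy is: (i) restrict attention to the integral allocations that can appear in the support of any admissible lottery — namely those that are both fPO and EF1 — and (ii) show that no convex combination of these allocations produces an expected fractional allocation that is envy-free.

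First, I would choose the values $v_{i,j}$ so that the fPO integral allocations are highly constrained. Concretely, pick the values so that the fPO allocations correspond, via a price-based characterization (an integral allocation is fPO iff it is supported by item prices under which every agent only receives maximum bang-per-buck items), to a short explicit list $\calA_1, \dots, \calA_t$. The values should be calibrated so that within this list the EF1 constraint further prunes the set to a small collection in which every surviving allocation treats some designated pair of agents asymmetrically: for instance, one agent always receives a bundle whose value (for her) is strictly smaller than her value for another designated agent's bundle, by a uniform positive additive gap $\delta>0$. Writing $X = \sum_z p^z \calA_z$ for the expected fractional allocation of any lottery over the surviving list, linearity of $v_i$ then gives $v_i(X_i) \leq v_i(X_j) - \delta < v_i(X_j)$, contradicting ex-ante EF for the pair $(i,j)$.

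The two substeps in Step (i) are routine but tedious. To enumerate fPO allocations I would use the Barman--Krishnamurthy--Vaish type characterization: guess item prices $p_j$ and read off the allocations in which each item goes to an agent maximizing $v_{i,j}/p_j$; combine with the EF1 check agent-by-agent. To be extra safe, I would also verify non-wastefulness is automatic here, since fPO implies non-wasteful. The enumeration should yield a list so small that the uniform-gap property in the previous paragraph can be read off by direct inspection.

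The main obstacle is Step (i) — designing the instance so that the list of ex-post fPO and EF1 allocations is simultaneously (a) non-empty, so that no weaker impossibility is being proved, and (b) rigidly asymmetric in exactly one pair $(i,j)$, so that the averaging afforded by randomization cannot close the envy gap. The natural design principle is to give one agent (say agent $3$) valuations that force her to appear as the "favored" recipient in every fPO+EF1 allocation of a particular hotly-contested item, while giving two other agents nearly-identical valuations on the remaining items so that they alternately envy agent $3$ no matter how the lottery distributes weight. Once such an instance is identified, the remaining argument reduces to a short linear computation showing that $\min_p \big(v_i(X_j) - v_i(X_i)\big) \geq \delta > 0$ uniformly over lotteries $p$, which rules out ex-ante EF.
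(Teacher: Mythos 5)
The statement you are proving is one the paper does not prove at all: it is quoted verbatim from \cite{freeman2020best} as a known impossibility, so there is no in-paper argument to compare against. Your outline does match the standard way such results are established (exhibit one concrete instance, enumerate the integral allocations that are both fPO and EF1, and show no lottery over them can be ex-ante envy-free), but as written it is a proof \emph{plan}, not a proof. The entire mathematical content of the theorem is the existence of the instance, and you never produce one: phrases like ``the values should be calibrated so that\ldots'' and ``once such an instance is identified\ldots'' defer exactly the step that carries all the difficulty. The concluding averaging argument ($v_i(X_i) \le v_i(X_j) - \delta$ by linearity) is immediate once the instance exists, so without explicit numbers and the explicit enumeration of the fPO$+$EF1 allocations, nothing has been proved.

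Two further points on the design you sketch. First, your gap argument requires that the \emph{same ordered pair} $(i,j)$ exhibit envy of at least $\delta$ in \emph{every} surviving allocation; if the identity of the envious agent or the envied agent varies across the support, randomization can in principle cancel the envy, so this uniformity is a genuine constraint you must verify, not assume. A cleaner and more robust route --- and the one that meshes with how this paper uses proportionality elsewhere (cf.\ Lemma~\ref{lemma:ex-ante-EF-and-NOM} and Theorem~\ref{thm: bobw impossibility}) --- is to target ex-ante \emph{proportionality} instead: since ex-ante EF implies ex-ante PROP, it suffices to find an instance in which every fPO$+$EF1 integral allocation gives some fixed agent $i$ utility strictly below $\frac{1}{n}\values_i(\items)$ by a uniform margin; then no convex combination can be ex-ante proportional, regardless of whom $i$ envies in each realization. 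Second, the price-based (maximum bang-per-buck) characterization of fPO you invoke is a fine enumeration tool, but for a small hand-built instance a direct check of fractional Pareto domination is usually shorter and less error-prone. Until a concrete instance with the verified enumeration is supplied, the argument has a hole precisely where the theorem lives.
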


\section{Best of Both Worlds}\label{sec: best of both}

In the previous sections we saw how fairness guarantees can be turned into NOM guarantees. For example, Lemma~\ref{lemma:ex-ante-EF-and-NOM} says that the worst-case guarantee of NOM is satisfied for ex-ante proportional algorithms, while Theorem~\ref{theorem:NOM+PO+EF1} turns any $\EFone$ algorithm into a NOM (plus $\EFone$) mechanism. In this section we show that this connection can be exploited in the other direction to prove impossibility results for fair algorithms. Specifically, we show that certain ``best-of-both-worlds'' fairness guarantees, that is, randomized allocations that satisfy an ex-ante and ex-post guarantees simultaneously, are not guaranteed to exist.
Our impossibility result uses the following theorem.

\begin{theorem}
\label{theorem:NOM-negative-result-combo}
Every deterministic (or randomized) mechanism for $n=2$ agents with additive and normalized utilities, that always outputs allocations that (ex-post) maximize the number of agents having positive utility, is obviously manipulable.
\end{theorem}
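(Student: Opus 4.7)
The plan is to exhibit, for $m = 2$ items (the case $m > 2$ follows by padding with zero-valued goods), a true valuation together with a misreport by which a carefully chosen agent strictly improves her worst-case expected utility, thereby violating inequality~\ref{inequality:NOM-worst-case}. The construction has two stages: first I pin down which agent to manipulate using the mechanism's own behavior on a symmetric profile, then I exhibit her profitable deviation.

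First I would examine the profile in which both agents report $(1, 0)$. Only item~$1$ is positively valued by either report, so at most one agent can receive positive utility and the mechanism must allocate item~$1$ to a single agent; since the two assignment probabilities sum to $1$, there is an agent---call her agent~$2$ without loss of generality---who receives item~$1$ with probability at least $\tfrac{1}{2}$ in this scenario. (Item~$2$ is irrelevant here since neither report values it.)

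Next I would let agent~$2$'s true valuation be $\values_2 = (\tfrac{2}{3} + \epsilon, \tfrac{1}{3} - \epsilon)$ for a small $\epsilon > 0$ and compare two cases. Under honest reporting, if agent~$1$ submits $(1, 0)$ then the unique allocation in which both agents receive positive utility assigns item~$1$ to agent~$1$ and item~$2$ to agent~$2$, so the mechanism is forced into exactly this assignment and agent~$2$'s true utility is exactly $\tfrac{1}{3} - \epsilon$; hence her worst-case (expected) utility under honest reporting is at most $\tfrac{1}{3} - \epsilon$. Under the misreport $\bids_2 = (1, 0)$, any report of agent~$1$ that places positive weight on item~$2$ similarly forces item~$1$ to go to agent~$2$ (the unique way to obtain two positive-utility agents given her reported support $\{1\}$), yielding true utility $\tfrac{2}{3} + \epsilon$; and if agent~$1$ also reports $(1, 0)$, we are back in the first-stage scenario, so agent~$2$ receives item~$1$ with probability at least $\tfrac{1}{2}$, giving expected true utility at least $\tfrac{1}{2}(\tfrac{2}{3} + \epsilon) = \tfrac{1}{3} + \tfrac{\epsilon}{2}$. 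In every case her worst-case expected utility under the misreport is at least $\tfrac{1}{3} + \tfrac{\epsilon}{2} > \tfrac{1}{3} - \epsilon$, which is the desired obvious manipulation.

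The main point that needs careful verification is that the ``maximize the number of positive-utility agents'' requirement really does pin down the allocations in the two cases above, leaving the mechanism no wiggle room. This reduces to the elementary observation that whenever exactly one of the two reported valuations has support $\{1\}$, the only way for both agents to obtain positive utility is to give item~$1$ to the agent whose report also values item~$2$ (and item~$2$ to the other); when both reports have support $\{1\}$, we instead fall back on the probabilistic lower bound from the first stage. I do not expect any further obstacles, as the rest of the argument is routine case analysis over agent~$1$'s possible reports.
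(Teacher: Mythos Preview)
Your argument is correct and essentially identical to the paper's proof (modulo the trivial relabeling of agents $1$ and $2$). One small slip: in your final paragraph you state the forced allocation backwards---when exactly one report has support $\{1\}$, item~$1$ must go to \emph{that} agent (the one valuing only item~$1$), not to the agent who also values item~$2$---but the body of your proof handles both the honest and misreport cases correctly, so this is only a typo in the summary.
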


The proof is deferred to Appendix~\ref{app:missing}. We note that we can apply Theorem~\ref{theorem:NOM-negative-result-combo} to show that for $n=2$ there is no NOM mechanism that outputs (ex-post) and MNW solution, since MNW is scale free (and therefore normalization comes without loss of generality).

We are now ready to prove our main result for this section.

\begin{theorem}\label{thm: bobw impossibility}
Under additive valuations, randomized allocations that are ex-ante proportional, ex-post Pareto efficient and ex-post maximize the number of agents having positive utility do not exist, for any number of agents and any number of items.
\end{theorem}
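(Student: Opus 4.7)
The plan is to derive this impossibility by combining the NOM negative result of Theorem~\ref{theorem:NOM-negative-result-combo} with the fairness-to-incentives bridge of Lemma~\ref{lemma:ex-ante-EF-and-NOM}. Suppose for contradiction that there is a randomized mechanism $\mathcal{R}$ whose output is simultaneously ex-ante proportional, ex-post Pareto efficient, and ex-post maximizes the number of agents with positive utility on some instance with $n$ agents and $m$ items. The high-level idea is that the first property will force $\mathcal{R}$ to satisfy the worst-case half of NOM, while the third property will force it to violate that same half on a carefully chosen instance, contradiction.

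The first step is to reduce to $n=2$. If $n > 2$, consider the restriction of $\mathcal{R}$ to valuation profiles in which agents $3,\ldots,n$ report the all-zero valuation $\mathbf{0}$. Such agents can never contribute to the count of positive-utility agents and impose no Pareto constraints, so the induced two-agent mechanism over agents $1,2$ still ex-post maximizes the number of positive-utility agents and remains ex-post Pareto efficient and ex-ante proportional (which continues to hold trivially for the zero-valued agents). Since proportionality, Pareto efficiency, and the cardinality objective are all scale-invariant, I may also normalize the valuations of agents $1,2$ so that each sums to $1$, placing us squarely in the hypothesis of Theorem~\ref{theorem:NOM-negative-result-combo}. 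That theorem then yields an agent $i \in \{1,2\}$, a true valuation $\values_i$, and a misreport $\bids_i$ that together witness obvious manipulability. Meanwhile, Lemma~\ref{lemma:ex-ante-EF-and-NOM} applied to $\mathcal{R}$ (on the full $n$-agent domain, and hence in particular when agents $3,\ldots,n$ report $\mathbf{0}$) tells us that Inequality~\ref{inequality:NOM-worst-case}, the worst-case half of NOM, holds for every agent and every misreport.

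The crux of the argument, and the main obstacle, is to show that the manipulation produced in the proof of Theorem~\ref{theorem:NOM-negative-result-combo} is specifically a violation of Inequality~\ref{inequality:NOM-worst-case}, not merely of Inequality~\ref{inequality:NOM-best-case}. This is precisely the intuition already flagged in Section~\ref{sec: efficient} for egalitarian and Nash welfare: an agent who concentrates all of her reported value on a single favorite item forces any mechanism that (ex-post) maximizes the number of positive-utility agents to allocate her something positive even against the adversarially worst continuation $\bids_{-i}$, whereas under truthful reporting the tie-breaking or least-favorite scenarios can leave her strictly worse off in the worst case. Unpacking the construction underlying Theorem~\ref{theorem:NOM-negative-result-combo} should exhibit exactly such a $(\values_i,\bids_i)$ with $\min_{\values_{-i}}\values_i(\mathcal{R}_i(\bids_i,\values_{-i})) > \min_{\values_{-i}}\values_i(\mathcal{R}_i(\values_i,\values_{-i}))$, directly contradicting Inequality~\ref{inequality:NOM-worst-case} and hence Lemma~\ref{lemma:ex-ante-EF-and-NOM}. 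This yields the desired contradiction and proves the theorem; the corollaries recovering the \cite{freeman2020best} impossibility for ex-post MNW, and the new impossibilities for ex-post PE + egalitarian and for ex-post leximin, then follow because each of those classes of allocations ex-post maximizes the number of positive-utility agents.
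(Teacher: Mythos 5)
Your proposal is correct and follows essentially the same architecture as the paper's proof: assume a mechanism $\mathcal{R}$ realizing all three properties, use Lemma~\ref{lemma:ex-ante-EF-and-NOM} to get the worst-case half of NOM from ex-ante proportionality, contradict Theorem~\ref{theorem:NOM-negative-result-combo}, and pass between $n=2$ and general $n$ via dummy agents. The one place you diverge is in how the contradiction is closed. The paper additionally proves that $\mathcal{R}$ satisfies the \emph{best-case} inequality~\ref{inequality:NOM-best-case} directly (under truthful reporting, the best case is that all opponents desire only your least-valued item, and you cannot do better than winning everything but one item, since the positive-utility count forces you to lose at least one item); with both inequalities in hand, $\mathcal{R}$ is NOM and the \emph{statement} of Theorem~\ref{theorem:NOM-negative-result-combo} gives the contradiction as a black box. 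You instead establish only inequality~\ref{inequality:NOM-worst-case} and then must argue that the manipulation constructed inside the \emph{proof} of Theorem~\ref{theorem:NOM-negative-result-combo} is specifically a worst-case violation. That claim is true --- the appendix proof exhibits worst-case utility at most $\tfrac{1}{3}-\epsilon$ under truth-telling versus at least $\tfrac{1}{3}+\tfrac{\epsilon}{2}$ under the misreport --- but you leave it at ``unpacking the construction \emph{should} exhibit'' such a pair, so this step is flagged rather than executed, and it makes your argument depend on the internals of another proof rather than its statement. Either closing works; the paper's is the more modular of the two. Your reduction to $n=2$ via all-zero reports for the extra agents is also fine (the paper pads with dummy items so each dummy agent values exactly one fresh item, which sidesteps any worry about degenerate zero valuations), though you should verify that ex-post Pareto efficiency forces every item positively valued by agents $1,2$ away from the zero-valuation agents so that the projected two-agent allocation inherits all three properties.
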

\begin{proof}
We begin by establishing the result for $n=2$ agents. Towards a contradiction, assume that for additive valuations, randomized allocations that are ex-ante proportional, ex-post Pareto efficient and ex-post maximize the number of agents having positive utility always exits, and let $\mathcal{R}$ be a randomized mechanism that, given the reported valuations of agents, outputs such a allocation. Note that this mechanism need not be computationally efficient. We will show that $\mathcal{R}$ is NOM in expectation, in direct contradiction to Theorem~\ref{theorem:NOM-negative-result-combo}.


Since, $\mathcal{R}$ outputs ex-ante proportional allocations, the worst-case guarantee of NOM (inequality~\ref{inequality:NOM-worst-case}) holds, via Lemma~\ref{lemma:ex-ante-EF-and-NOM}. Additionally, the best-case guarantee (inequality~\ref{inequality:NOM-best-case}) is also satisfied: if an agent $i \in \items$ reports her true valuation, then in the best case, all other agents only desire her least-valued item; in this case, agent $i$ receives all items except her least-valued item. This outcome cannot be (strictly) improved as agent $i$ must lose at least one item: the mechanism maximizes the number of agents having positive utility and each agent reports having a positive utility of at least one item, therefore, all items cannot be allocated to the same agent. Therefore, $\mathcal{R}$ is NOM is expectation, a contradiction. 

Thus, for $n=2$ agents randomized allocation that are ex-ante proportional, ex-post Pareto efficient, and ex-post maximize the number of agents having positive utility do not always exist. In particular, there is a $2$ agent instance where such randomized allocations do not exist. This instance can be easily converted into a $n>2$ agent instance by adding dummy agents $a_1, a_2, \ldots, a_k$ and dummy items $j_1, j_2, \ldots, j_k$ such that $(i)$ for each $i \in [k]$, agent $a_i$'s value for item $j_1$ is $1$ and her value for every other item is $0$, $(ii)$ the existing agents have zero value for each dummy item. In this new instance, every Pareto efficient allocation must be such that agent $a_i$ is allocated item $j_i$ and no other item that existing agents value positively. Thus, a randomized allocation that is ex-ante proportional, ex-post Pareto efficient, and ex-post maximizes the number of agents having positive utility would continue to satisfy these properties after removing the dummy agents and dummy items. However, as already established, such allocations do not exist for $n=2$ agents. This concludes the proof of Theorem~\ref{thm: bobw impossibility}.
\end{proof}

The existence of ex-ante proportional, ex-post Pareto efficient and ex-post $\EFone$ allocations remains an elusive open problem. Our result shows that replacing ex-post $\EFone$ with a different, mild fairness guarantee, namely ex-post maximizing the number of agents with positive utility, is impossible. The following corollary is immediate (and we note that point \#3 in the following corollary is shown by~\cite{freeman2020best}).


\begin{corollary}
Under additive valuations, randomized allocations that satisfy the following properties do not always exist.
\begin{enumerate}
    \item ex-ante proportional, ex-post Pareto efficient and ex-post egalitarian welfare maximizing
    \item ex-ante proportional and ex-post leximin allocations
    \item ex-ante proportional and ex-post MNW.
\end{enumerate}
\end{corollary}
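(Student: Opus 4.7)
My plan is to derive all three items of the corollary as direct consequences of Theorem~\ref{thm: bobw impossibility}, which already rules out the simultaneous existence of a randomized allocation that is ex-ante proportional, ex-post Pareto efficient, and ex-post maximizing the number of agents with positive utility. For each of the three items, I intend to show that the listed ex-post properties together imply (or already include) both ex-post Pareto efficiency and ex-post maximization of the number of agents having positive utility, at which point Theorem~\ref{thm: bobw impossibility} supplies the contradiction.

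For item~(1), the implication is immediate from the definition of an egalitarian welfare maximizing allocation given in Section~\ref{sec: prelims}: such an allocation is defined first to maximize the number of agents with positive utility, and then to maximize the minimum utility among that set. Combined with the ex-post Pareto efficiency hypothesis, the three assumed properties of the randomized allocation already match the three hypotheses of Theorem~\ref{thm: bobw impossibility}, so non-existence follows. For item~(3), the argument is analogous: an MNW allocation is, by the definition used in this paper (following \cite{caragiannis2019unreasonable}), one that first maximizes the number of agents with positive utility and then the geometric mean; furthermore, MNW allocations are known to be Pareto efficient. Thus ex-post MNW implies both ex-post Pareto efficiency and ex-post maximization of positive-utility agents, again contradicting Theorem~\ref{thm: bobw impossibility}.

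The step that requires a little extra argument is item~(2), leximin. Here I would argue directly that any leximin allocation is (i) Pareto efficient and (ii) maximizes the number of agents with positive utility. Property~(i) is standard: a Pareto improvement would lexicographically dominate the leximin allocation, contradicting leximin-optimality. For property~(ii), I would compare the sorted utility vectors: if an allocation $B$ had strictly more agents with positive utility than a leximin allocation $A$, then writing both vectors of utilities in non-decreasing order, the first coordinate at which they differ must favor $B$ (since $A$ has a zero where $B$ has a positive value, and the zeros come first in both), contradicting the leximin-optimality of $A$. Once this is established, ex-post leximin implies ex-post PO and ex-post maximization of positive-utility agents, and combined with ex-ante proportionality Theorem~\ref{thm: bobw impossibility} gives non-existence.

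Overall, the corollary is essentially a packaging of Theorem~\ref{thm: bobw impossibility}. The only substantive check is the leximin case, which hinges on the two short observations above about sorted utility vectors; I expect this to be the main (and still modest) obstacle, as the other two cases are immediate from the definitions adopted in the preliminaries.
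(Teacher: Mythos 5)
Your proposal is correct and matches the paper's intent exactly: the paper states this corollary as an immediate consequence of Theorem~\ref{thm: bobw impossibility}, and your reduction---checking that each of the three ex-post conditions implies both ex-post Pareto efficiency and ex-post maximization of the number of positive-utility agents---is precisely the argument being left implicit. Your sorted-utility-vector argument for the leximin case is the right way to fill in the one step that is not literally a restatement of the definitions in the preliminaries.
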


\bibliographystyle{alpha}
\bibliography{main}

\appendix
\section{The PS-Lottery Algorithm of [Aziz, 2020b]}
\label{appendix:PS-Lottery}

A square matrix $M \in [0,1]^{k \cdot k}$ is \emph{bistochastic} iff the sum of entries in each of its rows and columns is equal to one, i.e., for each $i \in [k]$, we have $\sum_{j = 1}^k M_{i,j} = \sum_{j = 1}^k M_{j,i} = 1$. Additionally, a bistochastic matrix $N \in \{0,1\}^{k \cdot k}$ is a \emph{permutation matrix} --- each row and column of a permutation matrix contains exactly one entry having value one and all other entries are zero.

In essence, the PS-Lottery algorithm of \cite{aziz2020simultaneously} is based on the following two well-known algorithms:\\
\noindent
\emph{Birkhoff's algorithm.} Given a bistochastic matrix $M \in [0,1]^{k \cdot k}$ as input, Birkhoff's algorithm can be used to decompose, in polynomial time, the matrix $M$ into a convex combination of permutation matrices. That is, Birkhoff's algorithm outputs permutation matrices $\{M_i\}_{i=1}^t$ and positive real numbers $\{p_i\}_{i=1}^t$ such that $M = \sum_{i=1}^k p_i M_i$ and $\sum_i p_i = 1$; here $t = \mathcal{O}(k^2)$ is a positive integer.

\noindent
\emph{Probabilistic serial algorithm.} Given a fair division instance wherein agents have additive valuations, the probabilistic algorithm outputs a fractional allocation that is envy-free (and hence proportional). The fractional allocation output by probabilistic serial can be interpreted as the output of the following continuous procedure: starting from time $t = 0$, simultaneously, each agent start consuming items in the order of their preference (i.e., if $v_{i,j_1} \geq v_{i,j_2} \ldots, \geq v_{i,j_m}$ then the items are consumed in the order $j_1, j_2, \ldots, j_m$\footnote{ties can be broken arbitrarily}) and at a rate of one item per unit time. If an item is fully consumed, then agents start consuming the next item as per their preference order. The algorithm terminates when all items have been consumed, which happens at time $t = \frac{m}{n}$. As a direct consequence of this, $(i)$ each agent gets exactly $\frac{m}{n}$ fraction of items at the end, and $(ii)$ the resultant fractional allocation is envy-free, since at every point agents are consuming their most-valued remaining item.

The PS-Lottery algorithm uses Birkhoff's algorithm to decompose the fractional allocation output by probabilistic serial into a convex combination over integral allocations, i.e., a randomized allocation. In addition, each integral allocation in the support of the randomized allocation is EF1. 

In Section~\ref{sec: fair}, towards showing that the PS-Lottery algorithm is NOM in expectation, we use the following properties of the PS-Lottery algorithm.
\begin{property}
The randomized allocation output by the PS-Lottery algorithm is ex-ante EF and ex-post EF1.
\end{property}

\begin{property}
\label{PS-Lottery-property-2}
The total fraction of items that each agent gets in the expected fractional allocation returned by the PS-Lottery algorithm is $\frac{m}{n}$
\end{property}

Additionally, suppose that agent $i \in \agents$ prefers the items in the order $v_{i, 1} \geq v_{i, j_2} \ldots \geq v_{i, j_m}$, and the preference order to all the other agents is opposite, i.e., for all agents $k \in [n] \setminus \{i\}$, we have $v_{k, m} \geq v_{k,m-1} \ldots \geq v_{k,1}$, then in the fractional allocation output by probabilistic serial, agent $i$ gets the items $1, 2, \ldots, \lfloor \frac{m}{n} \rfloor$ entirely, and a fraction $\frac{m}{n} - \lfloor \frac{m}{n} \rfloor$ of the item $\lfloor \frac{m}{n} \rfloor + 1$.




\section{Missing proofs}\label{app:missing}

\begin{proof}[Proof of Theorem~\ref{thm: NOM + egalitarian + n agents}]
Consider any (randomized or deterministic) mechanism $\mathcal{M}$ that maximizes the minimum utility among agents. To show that $\mathcal{M}$ is obviously manipulable, we consider an instance with $n=3$ agents and $m=4$ items. Let the true (normalized) valuation of agent $1$ be $\values_1 = (0.3, 0.3, 0.3, 0.1)$.

First, we will show that if agent $1$ reports honestly, then the worst-case case outcome is that she is allocated only item $4$, i.e., her utility in the worst-case is $0.1$. Note that since $\mathcal{M}$ maximizes the number of agents with positive utility, and agent $1$ has a positive value for all $4$ items, it must allocate at least one item to agent $1$; otherwise either agent $2$ or $3$ has more than one item, one of which could be transferred to agent $1$ to make her utility positive. Now, consider the case where the reports of other agents are $\bids_2 = (0, 0, 1, 0)$ and $\bids_3 = (0.05, 0.05, 0.9, 0)$. Here, item $3$ must be allocated to agent $2$ because it is the only item that she desires. By a straightforward case analysis, the unique way in which we can allocate the remaining items (items $1$, $2$ and $4$) to maximize the minimum utility is by allocating item $4$ to agent $1$ and items $1$ and $2$ to agent $3$. Since this allocation is unique, it must be the output of $\mathcal{M}$, irrespective of whether it is deterministic or randomized.
Therefore, the utility of agent $1$ is $0.1$ under honest reporting, in the worst case.


Now, consider the case where agent $1$ reports $\bids_1 = (\frac{1}{3}, \frac{1}{3}, \frac{1}{3}, 0)$. Since $\mathcal{M}$ maximizes the number of agents having non-zero utility, and agent $1$ positively values the first three items, she must get at least one of them: otherwise some other agent will have at least two of them, and transferring one of them to agent $1$ would strictly increase the number of agents with non-zero utility. Hence, the worst-case utility of agent $1$ when reporting $\bids_1$ but her true values are $\values_1$ is $0.3$, her true value for each of the first three items. 
Since the worst-case utility improved, $\mathcal{M}$ is obviously manipulable.

The same construction can be extended to $n > 3$ agents and $m = n+1$ items by considering the case wherein agent $1$'s true valuation $\values_1 = (\frac{1}{n} - \epsilon, \frac{1}{n} - \epsilon, \frac{1}{n} - \epsilon, \ldots, \frac{1}{n} - \epsilon, n. \epsilon)$ for any positive constant $\epsilon < \frac{1}{n(n+1)}$, and she misreports to $\bids_1 = (\frac{1}{n}, \frac{1}{n}, \frac{1}{n}, \ldots, \frac{1}{n}, 0)$ leading to an improvement in her worst-case utility. 
\end{proof}

\begin{proof}[Proof of Theorem~\ref{theorem:NOM-negative-result-combo}]
Let $\mathcal{M}$ be such a deterministic (or randomized) mechanism. We will prove the theorem (i.e., show that $\mathcal{M}$ is obviously manipulable) for the case of $m=2$ items; the proof easily generalizes to more than $2$ items.

Consider the case when both the agents report a value of $1$ for the first item and zero for the other item, i.e. $\bids_{i} = (1,0)$ for $i \in \{ 1, 2 \}$.
Given these reports, there must be an agent who gets item $1$ with probability at least $\frac{1}{2}$ (if $\mathcal{M}$ is deterministic then this probability will be exactly $1$); assume that this is agent $1$, without loss of generality.

Consider the case that the true valuation of agent $1$ is $\values_1 = (\frac{2}{3}+\epsilon, \frac{1}{3} - \epsilon)$, for some small constant $\epsilon > 0$. The worst case for agent $1$ under honest reporting happens when the reported valuation of agent $2$ $\bids_2 = (1,0)$, in which case item $1$ must be allocated to agent $2$ and item $2$ to agent $1$ -- irrespective of whether the mechanism $\mathcal{M}$ is deterministic or randomized since $\mathcal{M}$ maximizes the number of agents having positive utility \emph{ex-post}, and allocating item $1$ to agent $2$ and item $2$ to agent $1$ is the unique allocation that gives both agents non-zero utility. Therefore, the worst-case utility of agent $1$ is at most $\frac{1}{3} - \epsilon$.

Now, consider the dishonest report $\bids_1 = (b_{1,1}, b_{1,2}) = (1,0)$. If $b_{2,2} > 0$, then mechanism $\mathcal{M}$, in order to maximize the number of agents having positive utility, will allocate item $2$ to agent $2$ and item $1$ to agent $1$, i.e., agent $1$'s utility in this case (per $\values_1$) would be $\frac{2}{3}+\epsilon$. On the other hand, if $b_{2,2} = 0$ (and therefore $b_{2,1} = 1$) then as per our choice of mechanism $\mathcal{M}$, agent $1$ will receive item $1$ with probability at least $\frac{1}{2}$. Hence, her utility in this case would be at least $\frac{1}{2} \values_{1,1} = \frac{1}{3} + \frac{\epsilon}{2}$. The above case analysis shows that if agent $1$ reports valuation $\bids_1$ then her worst-case utility will be $\frac{1}{3} + \frac{\epsilon}{2}$, which is strictly larger then her worst-case utility when reporting true valuation.
Therefore, $\mathcal{M}$ is obviously manipulable.
\end{proof}










\end{document}